\numberwithin{equation}{section}
\newtheorem{Theorem}{Theorem}[section]
\newtheorem{Corollary}[Theorem]{Corollary}
\newtheorem{Lemma}[Theorem]{Lemma}
 { \theoremstyle{definition}
\newtheorem{Remark}[Theorem]{Remark} }
\begin{document}

\newcommand{\arXivNumber}{2004.00933}

\renewcommand{\PaperNumber}{135}

\FirstPageHeading

\ShortArticleName{Toward Classification of 2nd Order Superintegrable Systems}

\ArticleName{Toward Classification of 2nd Order Superintegrable\\ Systems in 3-Dimensional Conformally Flat Spaces\\ with Functionally Linearly Dependent Symmetry\\ Operators}

\Author{Bjorn K.~BERNTSON~$^\dag$, Ernest G.~KALNINS~$^\ddag$ and Willard MILLER Jr.~$^\S$}

\AuthorNameForHeading{B.K.~Berntson, E.G.~Kalnins and W.~Miller~Jr.}

\Address{$^\dag$~Department of Mathematics, KTH Royal Institute of Technology, Stockholm, Sweden}
\EmailD{\href{mailto:bbernts@kth.se}{bbernts@kth.se}}

\Address{$^\ddag$~Department of Mathematics, University of Waikato, Hamilton, New Zealand}
\EmailD{\href{mailto:math0236@waikato.ac.nz}{math0236@waikato.ac.nz}}

\Address{$^\S$~School of Mathematics, University of Minnesota, Minneapolis, Minnesota, USA}
\EmailD{\href{mailto:miller@ima.umn.edu}{miller@ima.umn.edu}}
\URLaddressD{\url{http://www-users.math.umn.edu/~mille003/}}

\ArticleDates{Received April 07, 2020, in final form December 09, 2020; Published online December 16, 2020}

\Abstract{We make significant progress toward the classification of 2nd order superintegrable systems on 3-dimensional conformally flat space that have functionally linearly dependent (FLD) symmetry generators, with special emphasis on complex Euclidean space. The symmetries for these systems are linearly dependent only when the coefficients are allowed to depend on the spatial coordinates. The Calogero--Moser system with 3~bodies on a~line and 2-parameter rational potential is the best known example of an FLD superintegrable system. We work out the structure theory for these FLD systems on 3D conformally flat space and show, for example, that they always admit a 1st order symmetry. A partial classification of FLD systems on complex 3D Euclidean space is given. This is part of a~project to classify all 3D 2nd order superintegrable systems on conformally flat spaces.}

\Keywords{superintegrable systems; Calogero 3 body system; functional linear dependence}

\Classification{20C35; 35B06; 70H20; 81Q80; 81R12}

\section{Introduction}
There is a hierarchy of 2nd order classical and quantum superintegrable systems in 3-dimensional conformally flat spaces, ranging from the most tractable at the top, nondegenerate (i.e., 4-parameter) potentials with 6 linearly independent symmetries, all of which have been classified, followed by semidegenerate (i.e., 3-parameter) potentials on which much progress has been made, to the least tractable (1-parameter) for classification at the bottom. By definition the 2 classes at the top admit 5 functionally linearly independent symmetry operators, i.e., they are not only linearly independent in the usual sense but also if the coefficients are allowed to depend on the spatial variables. However, there exist 2nd order superintegrable systems with at least 5 functionally linearly dependent symmetry operators and 2-parameter potentials; such systems have never been classified. We initiate the study of such systems by developing their structure theory on conformally flat spaces and performing a partial classification of these systems in constant curvature spaces.

We recall some basic facts and results about conformally flat superintegrable systems.
An $n$-dimensional complex Riemannian space is conformally flat if and only if
it admits a set of local coordinates $\{x_1,\dots,x_n\}$ such that the
contravariant metric tensor takes the form
$g^{ij}=\delta^{ij}/\lambda({\bf x})$ \cite{KKM2005,KKM2018}. A classical
superintegrable system ${\mathcal H} =\sum_{ij}g^{ij}p_ip_j+V({\bf x})$ on
the phase space of this manifold is one that
admits $2n-1$ functionally independent generalized
symmetries (or constants of the motion) ${\mathcal S}_{k}$ for $k=1,\dots,2n-1$ with
${\mathcal S}_{1}={\mathcal H}$ where the ${\mathcal S}_{k}$ are polynomials in the momenta~$p_j$. It is easy to see that $2n-1$ is the maximum possible
number of functionally independent symmetries and, locally,
such (in general nonpolynomial) symmetries always exist. The system is second order maximal superintegrable if the $2n-1$ functionally independent symmetries can be chosen
to be quadratic in the momenta. (Second order superintegrable systems,
though complicated, are tractable because standard orthogonal separation of variables techniques are associated
with second order symmetries, and these techniques can be brought to bear.)

For a classical 3D system in a conformally flat space (note that all
2D spaces are conformally flat) we can
always choose local coordinates $\{x,y,z\}$, not unique, such that the
Hamiltonian takes the form
${\mathcal H}=\big(p_1^2+p_2^2+p_3^2\big)/{\lambda(x,y,z)}+V(x,y,z).$
This system is {\it second order superintegrable} with semidegenerate potential
$V=V(x,y,z;\alpha,\beta,\gamma)=\alpha V^{\alpha}({\bf x}) +\beta V^{\beta}({\bf x}) +\gamma V^{\gamma}({\bf x}) $ if it admits 5 functionally independent
quadratic constants of the motion, i.e., generalized symmetries,
\begin{equation*} 
{\mathcal S}_{k}=\sum_{i,j}a^{ij}_{k}p_ip_j+W_{k}(x,y,z;\alpha,\beta,\gamma)={\mathcal S}_{k}^{0}+W_{k},\qquad k=1,\dots,5.\end{equation*}
 Here the functions $V^{\alpha}$, $V^{\beta}$, $V^{\gamma}$ are independent of the parameters $\alpha$, $\beta$, $\gamma$, the set $\big\{V^{\alpha},V^{\beta},V^{\gamma}\big\}$ must have linearly independent gradients, and we ignore the additive constant. We call this a~3-parameter potential.

In some cases the system may also have a 6th symmetry ${\mathcal S}_{6}$, (but no more) such the set $\big\{{\mathcal S}_{k}^0 \,|\, k=1,\dots,6\big\}$ is functionally linearly independent and this implies that the potential depends on 4 parameters~\cite{KKM2005}. Furthermore the classification theory requires that the 5, 6 constants of the motion be {\it functionally linearly independent}, i.e., the equation
\begin{equation}\label{FLI} \sum_{k=1}^{5,6}f_{k}({\bf x})\mathcal{S}_k^0 ({\bf x})=0\end{equation}
is satisfied if and only if $f_{k}({\bf x})=0$ for all $k$. If equation (\ref{FLI}) is satisfied for functions $f_{k}({\bf x})$ not identically $0$, the set of constants of the motion are {\it functionally linearly dependent} (FLD).

For 2nd order superintegrable systems in 3 dimensions that are functionally linearly independent, the systems that admit 6 linearly independent second order constants of the motion (the maximum possible) have all been classified~\cite{CK, KKM2007b} and there has been considerable progress on the remaining 5 linear independent case~\cite{EM,KKM2007a}. However, little has been done to classify superintegrable systems in 3 dimensions that are FLD. The best known such system is the rational 3-body Calogero--Moser system on the line with 2-parameter potential. To the best of our knowledge there are no 2nd order FLD superintegrable systems with trigonometric, hyperbolic, or elliptic potentials. In this paper we derive structure results for all 2nd order superintegrable FLD systems with $r\ge 5$ linearly independent second order symmetries on conformally flat real or complex spaces that have potentials that depend on 2 functionally independent variables (the maximum possible), and such that the FLD equation $\sum_{k}f_{k}({\bf x})\mathcal{S}_{k}^0({\bf x})=0$ is satisfied with at most~5 nonzero terms $f_{k}({\bf x})$. (For the analogous 2nd order 2-dimensional FLD systems the answer is known: there is only one such family of systems~\cite{KKM2005c}.)

The paper is organized as follows: in Section~\ref{section2} we present the Calogero--Moser system and a system on 3-dimensional Minkowski space as examples. In Section~\ref{4} we present structure results for all FLD systems on conformally flat spaces. The most important result is that all such systems admit a 1st order constant of the motion. In Section~\ref{5} we work out a partial classification of all 3-dimensional second order superintegrable FLD systems in flat space with 2-parameter potentials, such that the FLD equation $\sum_{k}f_{k}({\bf x})\mathcal{S}_{k}^0({\bf x})=0$ is satisfied with at most 5 nonzero terms $f_{k}({\bf x})$, (including the structure of the symmetry algebras for most of these systems). In Section~\ref{6} we summarize the corresponding result for 3-dimensional FLD systems on the complex 3-sphere. In Section~\ref{7} we present some conclusions and a brief discussion of related properties of these systems. Here all of our systems are classical. However the quantum analogs follow easily by symmetrization of the symmetry operators and there is a 1-1 matching of the Hamiltonians modulo the scalar curvature~\cite{KKM2006}. In particular the Euclidean space Hamiltonians are identical.

\section{Examples} \label{section2}

\subsection[An FLD example: the rational Calogero--Moser system with 2-parameter potential]{An FLD example: the rational Calogero--Moser system\\ with 2-parameter potential}

This potential takes the form \cite{BCR,Cal1,Cal2,Evans,Moser,Ran,RWW,SW,WS}
\begin{equation}\label{Cal1}
V=\frac{\alpha}{(x-y)^2}+\frac{\beta}{(y-z)^2}+\frac{\gamma}{(z-x)^2}.
\end{equation}
Let us consider the system of symmetries defining the
system with potential $V$. A basis for the space of symmetries is
(using $J_{12}=xp_2-yp_1$, $J_{23}=yp_3-zp_2$, $J_{13}=xp_3-zp_1$)
\begin{gather*}
{\mathcal S}_{1}= {\mathcal H} = p_1^2+p_2^2+p_3^2+V, \qquad {\mathcal S}_{2}=(p_1+p_2+p_3)^2, \qquad
{\mathcal S}_{3}= J_{12}^2+J_{23}^2+J_{13}^2+W_3, \\
{\mathcal S}_{4}= p_1(J_{13}-J_{12})+p_2(J_{12}-J_{23})+p_3(J_{23}-J_{13})+W_4, \\
{\mathcal S}_{5}= J_{12}J_{13}+J_{23}J_{12}+J_{13}J_{23}+W_5,
\end{gather*}
where the potential terms $W_i$ contain the parameters $\alpha$, $\beta$, $\gamma$. In this case, the Bertrand--Darboux equations \cite{KKM2005,KKM2006} for each symmetry ${\mathcal S}_{k}=\sum_{ij} a^{ij}_{k} p_ip_j+W_k$ of ${\mathcal H}$ are
\begin{gather} V_x+V_y+V_z=0,\qquad
(x-y)V_{xy}+(z-y)V_{yz}-V_x+2V_y-V_z=0, \nonumber\\
 (x-z)V_{xz}+(y-z)V_{yz}-V_x-V_y+2V_z=0, \label{BD1}
\end{gather}
and their differential consequences.

We say that this is a (functionally independent) 2-parameter potential. A 2-parameter potential is one that can be expressed in the form $V =a_1 f(x,y,z)+a_2 g(x,y,z)$ where $a_1$, $a_2$ are arbitrary parameters, $f$ and $g$ are independent of these parameters, and the set $\{f,g\}$ is linearly independent. (Here we are ignoring the trivially additive parameter in the potential.) Functional independence for the potential is the additional requirement that the set $\{f,g\}$ is functionally independent. Functional dependence means essentially that the system could be recast as 1-parameter.

What is important to notice here is the occurrence of the first order
condition $V_x+V_y+V_z=0$ for the potential as a consequence of the
Bertrand--Darboux equations. Thus the potential is a~function
of only two functionally independent variables, impossible for nondegenerate
potentials.

We observe the FLD relation
\[ (x+y+z)^2{ {\mathcal S}}_{1}^0-\big(x^2+y^2+z^2\big){{\mathcal S}}_{2}^0+2{{\mathcal S}}_{3}^0-2(x+y+z){{\mathcal S}}_{4}^0-2{{\mathcal S}}_{5}^0=0
\]
obeyed by the purely quadratic terms in the symmetries, i.e., we
have set ${\mathcal S}_{i}={ {\mathcal S}}_{i}^0+W_i$. We show below, in Theorem~\ref{functionallinearindependence1}, that the existence of such an FLD relation implies the existence of a first order condition in the Bertrand--Darboux equations \eqref{BD1}. Furthermore, the 5 quadratic symmetries are functionally dependent:
\[ {\mathcal H}({\mathcal S}_3-{\mathcal S}_5)-\frac{{\mathcal S}_4^2}{2}-\frac{{\mathcal S}_2{\mathcal S}_3}{2}+\frac{(\alpha+\beta+\gamma)}{2}{\mathcal H}-\frac{(\alpha+\beta+\gamma)}{2}{\mathcal S}_2=0.\]
Hence the system defined by \eqref{Cal1} is minimally superintegrable with~4 functionally independent symmetries. We show below, in Corollary~\ref{corollary1}, that this is a generic feature of FLD systems with exactly 5 linearly independent generators.

\subsection{A Minkowski space FLD example}

Here
\begin{equation}\label{Mink_Hamiltonian}
{\mathcal H}=p_1^2+p_2^2+p_3^2 +\alpha(x-z)+\beta(y+{\rm i}z)+\gamma(y+{\rm i}z)^2,
\end{equation}
which admits the 1st order symmetry
\[{\mathcal J}=p_1-{\rm i}p_2+p_3\]
 and the 2nd order symmetries \cite{EM}
\begin{gather*}
{\mathcal S}_{1}= {\mathcal H}=p_1^2+p_2^2+p_3^2 +\alpha(x-z)+\beta(y+{\rm i}z)+\gamma(y+{\rm i}z)^2, \\
 {\mathcal S}_{2}= {\mathcal J}^2,\qquad {\mathcal S}_{3}=p_1^2+\alpha x,\qquad
 {\mathcal S}_{4}= (-{\rm i}p_2+p_3)p_1+(p_3-{\rm i}p_2)^2+\tfrac{\alpha}{2}({\rm i}y-x-z), \\
 {\mathcal S}_{5}= (p_1-{\rm i} p_2+p_3)({\rm i}J_{12}-J_{13})
 -\tfrac{{\rm i}}{2}\alpha yz-\tfrac{{\rm i}}{2}\alpha xy+\tfrac14 \alpha x^2+\tfrac12\alpha xz-\tfrac14 \alpha y^2+\tfrac14 \alpha z^2.
\end{gather*}

The 5 generators are linearly independent and satisfy the FLD relation
\[ ({\rm i}y-z){\mathcal S}_{2}^0+(-{\rm i}y+x+z){ {\mathcal S}}_{4}^0+{{\mathcal S}}_{5}^0=0,\]
where as before $\mathcal{S}_{k}^0$ is the quadratic momentum part of the symmetry $\mathcal{S}_{k}$.

\section{Some theory}\label{4}
Functional linear dependence of a functionally independent maximal set
of symmetries is hard to achieve. We recall the following
result where the system need not be superintegrable \cite{KKM2006}:
\begin{Theorem} \label{functionallinearindependence1}
Let the linearly independent set $\{{\mathcal H}={\mathcal S}_{1},{
{\mathcal S}}_{2},\dots, {\mathcal S}_{t}\}$, $(t>2)$ be a functionally
linearly dependent basis of $2$nd order symmetries for the system
 ${\mathcal H}=\big(p_1^2+p_2^2+p_3^3\big)/\lambda({\bf x})+V={\mathcal H}^0+V$ with nontrivial potential
$V$, i.e., there is a relation
$\sum_{k}f_k({\bf x}) { {\mathcal S}}_{k}^0\equiv 0$ in an open set, where not all
$f_{k}({\bf x})$ are constants, and no such relation holds for the
$f_{k}({\bf x})$ all constant, except if the constants are all zero. $($Here
${\mathcal S}_{i}= {\mathcal S}_{i}^0+W_i$ where the $W_i$ are the potential terms.$)$ Then the
potential must satisfy a first order relation $AV_x+BV_y+CV_z=0$ where
not all of the functions~$A$,~$B$,~$C$ vanish.
\end{Theorem}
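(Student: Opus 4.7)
The plan is to combine two structural facts, valid for any second-order symmetry $\mathcal{S}_k=\sum_{ij}(a_k)^{ij}p_ip_j+W_k$ of a conformally flat Hamiltonian, with the FLD hypothesis. First, the linear-in-$p$ part of $\{\mathcal{H},\mathcal{S}_k\}=0$ gives the Bertrand--Darboux gradient identity $W_{k,i}=\lambda\sum_{j}(a_k)^{ij}V_{,j}$. Second, the FLD relation $\sum_k f_k\mathcal{S}_k^0\equiv 0$ kills the momentum-quadratic part of $\sum_k f_k\mathcal{S}_k$, so this sum collapses to the coordinate function $F({\bf x}):=\sum_k f_k W_k$. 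Since each $\mathcal{S}_k$ Poisson-commutes with $\mathcal{H}$, the key identity
\[
\{\mathcal{H},F\}=\sum_k \{\mathcal{H},f_k\}\mathcal{S}_k
\]
holds, and matching both sides as polynomials in $p$ is what drives the argument.

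Introduce the tensor $C_\ell^{ij}:=\sum_k f_{k,\ell}(a_k)^{ij}$, symmetric in $(i,j)$. The cubic-in-$p$ part of the right-hand side of the key identity must vanish (the left has only linear-in-$p$ terms), and after symmetrization this yields the cyclic constraint
\[
C_a^{bc}+C_b^{ca}+C_c^{ab}=0.
\]
Matching linear-in-$p$ parts reads off $F_{,\ell}=\sum_k f_{k,\ell}W_k$; imposing integrability $F_{,\ell i}=F_{,i\ell}$ and substituting the Bertrand--Darboux identity produces, for every pair $(i,\ell)$,
\[
\sum_{j}V_{,j}\bigl(C_\ell^{ij}-C_i^{j\ell}\bigr)=0,
\]
which is already of the desired form $AV_x+BV_y+CV_z=0$ for each individual choice of $(i,\ell)$.

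To finish one needs to show that for some choice of $(i,\ell)$ this relation is nontrivial. Suppose not: then $C_\ell^{ij}=C_i^{j\ell}$ for all indices, so $C_\ell^{ij}$ (already symmetric in $(i,j)$) is totally symmetric in its three indices, and the cyclic constraint forces $3C_a^{bc}=0$, hence $C\equiv 0$. Equivalently, each of the tuples $(f_{k,1})$, $(f_{k,2})$, $(f_{k,3})$ is itself an FLD relation among the $\mathcal{S}_k^0$. I would then iterate the derivation on these derived tuples: since by hypothesis no nontrivial constant-coefficient linear combination of the $\mathcal{S}_k^0$ vanishes, and since $(f_k)$ is not a constant tuple, the iterated differentiations must eventually produce a nonzero constant-coefficient FLD relation, contradicting linear independence.

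The mechanical parts --- bracket bookkeeping, degree-by-degree matching, and the elementary Young-type observation that total symmetry together with vanishing cyclic sum forces a three-index tensor to vanish --- are routine. The main obstacle is making precise the termination of the iteration in the last step: in the real or complex analytic category this should follow from finite-dimensionality of the Killing tensor space on the conformally flat background, together with a careful degree count on the Taylor data of the $f_k$, but the bookkeeping for this closing argument is where the genuine care is needed.
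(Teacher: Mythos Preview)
Your computational setup is sound and essentially parallel to the paper's: the cyclic constraint $C_a^{bc}+C_b^{ca}+C_c^{ab}=0$ and the family of first–order relations $\sum_j(C_\ell^{ij}-C_i^{j\ell})V_{,j}=0$ are exactly equations~(\ref{defeqns1}) and~(\ref{3symmetries}) in the paper, and your ``total symmetry $+$ vanishing cyclic sum $\Rightarrow C\equiv0$'' is the same observation the paper makes in the sentence following~(\ref{3symmetries}). Your packaging via the auxiliary function $F=\sum_k f_kW_k$ and its mixed partials is a pleasant alternative to the paper's ``subtract the Bertrand--Darboux systems $C_0-\sum_\ell f_\ell C_\ell=0$'', but the content is the same.

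The genuine gap is precisely the one you flag: the closing argument. Your proposed mechanism---``iterated differentiations must eventually produce a nonzero constant-coefficient FLD relation''---is not what actually happens; the tuples $(\partial_\alpha f_k)_k$ need never become constant, so the iteration as you describe it has no obvious termination. The paper avoids iteration entirely by a one-line reduction you omit: before doing anything, relabel so that $\mathcal{S}_0^0=\sum_{\ell=1}^{r}f_\ell\,\mathcal{S}_\ell^0$ with the subset $\{\mathcal{S}_1^0,\dots,\mathcal{S}_r^0\}$ chosen \emph{functionally} linearly independent (take any maximal FLI subset; by ordinary linear independence of the full set, the expansion coefficients cannot all be constants). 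With that choice, $C\equiv0$ reads $\sum_\ell(\partial_{x_k}f_\ell)\,\mathcal{S}_\ell^0=0$ for each $k$, and functional linear independence of the subset forces $\partial_{x_k}f_\ell\equiv0$ immediately---so the $f_\ell$ are constants, a contradiction. No iteration, no Taylor bookkeeping.

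If you want to salvage your route instead, here is how the termination actually works: assuming $C^{(\alpha)}\equiv0$ at every stage, one has $\sum_k(\partial_\beta f_k)\,a_k^{ij}\equiv0$ for every multi-index $\beta$. Differentiating the original identity $\sum_k f_k a_k^{ij}=0$ repeatedly and subtracting these then gives $\sum_k f_k\,(\partial_\gamma a_k^{ij})\equiv0$ for every $\gamma$. Evaluating at any point $x_0$ and Taylor-expanding the $a_k^{ij}$ (analytic in the conformally flat setting) yields the constant-coefficient relation $\sum_k f_k(x_0)\,\mathcal{S}_k^0\equiv0$, hence $f_k(x_0)=0$ for all $k$ and all $x_0$, contradicting nontriviality of the FLD relation. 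This is the correct replacement for your last paragraph, but the paper's FLI-subset reduction is the cleaner fix.
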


\begin{proof} By relabeling, we can express one of the quadratic parts of
the constants of the motion~${ {\mathcal S}}_{0}^0$ as
a linear combination of a linearly independent subset
\begin{equation*}
\big\{{ {\mathcal S}}_{1}^0,\dots,{ {\mathcal S}}_{r}^0, \, 1\le r\le t-1\big\},\end{equation*}
i.e.,
 \begin{equation*} { \mathcal S}_{0}^0=\sum_{\ell=1}^r f_{\ell}(x,y,z) { {\mathcal S}}_{\ell}^0.\end{equation*}

Taking the Poisson bracket of both sides of this equation with $\big(p_1^2+p_2^2+p_3^3\big)/\lambda$ and
using the fact that each of the ${{\mathcal S}}_{h}$ is a constant of the
motion, we obtain the identity
 \begin{equation*}
 \sum_{\ell=1}^r\sum_{i,j=1}^3(\partial_{x_k}f_{\ell})a_{\ell}^{ij}p_ip_jp_k =0,
\end{equation*}
where $(x,y,z)\equiv (x_1,x_2,x_3)$.
It is straightforward to check that this identity can be satisfied if
and only if the functions
\[ c_{k}^{ij}= \sum_{\ell=1}^r (\partial_{x_k}f_{\ell})a_{\ell}^{ij},\qquad 1\le i,j,k\le 3
\]
satisfy the equations
\begin{equation}\label{defeqns1} c_i^{ii}=0, \qquad c_j^{ii}+2c_i^{ij}=0,\quad i\ne j,
\qquad
c_3^{12}+c_1^{23}+c_2^{31}=0.
\end{equation}
Note that $c_k^{ij}=c_k^{ji}$. Corresponding to each of the basis symmetries ${\mathcal
 S}_{\ell}$ there is a linear set $C_{\ell}=0$ of Bertrand--Darboux equations~\cite{KKM2006}. A straightforward substitution into the identity
$C_0-\sum_{\ell=1}^rf_{\ell}({\bf x})C_\ell=0$
yields the relation
\begin{equation}\label{3symmetries}
 \left(\begin{matrix} c^{12}_1- c^{11}_2\\ c^{31}_1-c^{11}_3
\\ c^{31}_2-c^{21}_3\end{matrix}\right)V_1+ \left(\begin{matrix}c^{22}_1-c^{21}_2\\ c^{32}_1-c^{12}_3
\\ c^{32}_2-c^{22}_3\end{matrix}\right)V_2+ \left(\begin{matrix}c^{32}_1-c^{31}_2\\ c^{33}_1-c^{13}_3
\\ c^{33}_2-c^{23}_3\end{matrix}\right)V_3=0.
\end{equation}

These first order differential equations for the potential cannot all vanish identically.
Indeed if they did all vanish then we would have the conditions
\begin{gather*} c^{12}_1=c^{11}_2,\qquad c^{31}_1=c^{11}_3,\qquad
c^{31}_2=c^{21}_3,\qquad c^{22}_1=c^{21}_2,\qquad c^{32}_1=c^{12}_3,\\
c^{32}_2=c^{22}_3,\qquad c^{32}_1=c^{31}_2,\qquad
c^{33}_1=c^{13}_3,\qquad c^{33}_2=c^{23}_3.
\end{gather*}
These conditions, together with conditions (\ref{defeqns1}) show that
$c_i^{jk}=0$ for all $i$, $j$, $k$. Thus we have
$\sum_{\ell=1}^r (\partial_{x_k}f_{\ell})a_{\ell}^{ij}
=0$, $ 1\le i,j,k\le 3$.
Since the set $\big\{{ {\mathcal S}}_{1}^0,\dots,{ {\mathcal S}}_{r}^0\big\}$, is functionally
linearly independent, we have $\partial_{x_k}f_{\ell}\equiv 0$
for $1\le k\le 3$, $1\le \ell\le r$. Hence the $f_{\ell}$ are
constants, which means that
$ { {\mathcal S}}_{0}^0-\sum_{\ell=1}^rf_{\ell} { { \mathcal S}}_{\ell}^0=0$.
 Thus the set $\big\{{\mathcal S}_{0}^0,\dots, {\mathcal S}_{4}^0\big\}$ is linearly dependent. This is a~contradiction!
\end{proof}

This shows that the potential function for any system, superintegrable or not, with a basis of
symmetries that is functionally linearly dependent must satisfy at
least one
nontrivial
first order partial differential equation $AV_x+BV_y+CV_z=0$ where
 the functions~$A$,~$B$,~$C$ are parameter-free. This means that all such
 potentials depend on either one or two functionally independent coordinates.

\begin{Corollary}\label{corollary1} Suppose the system has exactly $5$ linearly independent generators $\{{\mathcal S}_1={\mathcal H},\dots,\allowbreak {\mathcal S}_{5}\}$ and is a functionally
linearly dependent basis of $2$nd order symmetries for the Hamiltonian
 ${\mathcal H}=\big(p_1^2+p_2^2+p_3^3\big)/\lambda({\bf x})+V={\mathcal H}^0+V$ with $3$-parameter potential. Then this set of $5$ generators must be functionally dependent.
\end{Corollary}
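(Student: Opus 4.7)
The plan is to argue by contradiction: suppose the five generators $\mathcal{S}_1, \ldots, \mathcal{S}_5$ were functionally independent, and combine Theorem~\ref{functionallinearindependence1} with the linearly independent gradients built into the definition of a $3$-parameter potential to reach an impossibility.

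Under the supposition we have a linearly independent, functionally linearly dependent set of five $2$nd order symmetries for a Hamiltonian with nontrivial potential, so Theorem~\ref{functionallinearindependence1} applies and produces a nontrivial first-order relation
\[ AV_x + BV_y + CV_z = 0, \qquad (A, B, C) \not\equiv (0, 0, 0). \]
The key observation, to be verified by inspecting the construction inside the proof of that theorem, is that $A$, $B$, $C$ are assembled from the quantities $c_k^{ij} = \sum_\ell (\partial_{x_k} f_\ell) a_\ell^{ij}$, where the $a_\ell^{ij}$ are the parameter-free coefficients of the quadratic parts $\mathcal{S}_\ell^0$ and the $f_\ell$ are the parameter-free FLD weights. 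Hence $A$, $B$, $C$ do not depend on $\alpha$, $\beta$, $\gamma$.

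Substituting $V = \alpha V^\alpha + \beta V^\beta + \gamma V^\gamma$ and equating coefficients of the three arbitrary parameters then splits the first-order relation into
\[ AV^j_x + BV^j_y + CV^j_z = 0, \qquad j \in \{\alpha, \beta, \gamma\}. \]
Thus the nonzero vector $(A, B, C)$ would have to be orthogonal to each of the three gradients $\nabla V^\alpha$, $\nabla V^\beta$, $\nabla V^\gamma$. But the definition of a $3$-parameter potential requires precisely these three gradients to be linearly independent at a generic point, so they span the full three-dimensional gradient space and admit no nonzero common annihilator. This forces $(A, B, C) \equiv (0, 0, 0)$, contradicting Theorem~\ref{functionallinearindependence1}, so the five generators cannot be functionally independent.

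The only delicate step is the bookkeeping check that $A$, $B$, $C$ are truly parameter-free; once that is settled, the clash between the derived first-order constraint on $V$ and the linear independence of the $V^j$-gradients is immediate. Note that the ``exactly 5'' clause plays no active role in the argument---the same proof would obstruct functional independence for any FLD family of five linearly independent quadratic symmetries of a $3$-parameter potential---it merely singles out the semidegenerate setting from the nondegenerate one.
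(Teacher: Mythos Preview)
Your proof is correct, but it takes a different route from the paper's, and in fact proves something stronger than the corollary as stated.

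The paper argues by contradiction using an external structural result: assuming functional independence of the five generators, it invokes \cite[equation~(2)]{KKM2007a} to conclude that at any fixed point one may prescribe $V,V_x,V_y,V_z$ arbitrarily, which then clashes with the first-order constraint $AV_x+BV_y+CV_z=0$ supplied by Theorem~\ref{functionallinearindependence1}. Your argument bypasses the external citation entirely: you observe directly that the coefficients $A,B,C$ are assembled from the parameter-free data $c_k^{ij}=\sum_\ell(\partial_{x_k}f_\ell)a_\ell^{ij}$, substitute the $3$-parameter form $V=\alpha V^\alpha+\beta V^\beta+\gamma V^\gamma$, and split on parameters to force $(A,B,C)$ orthogonal to each of the three linearly independent gradients $\nabla V^j$, an impossibility.

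What your route buys is self-containment and a sharper conclusion: you never actually use the supposition of functional independence, so what you have really shown is that the hypotheses ``FLD'' and ``$3$-parameter potential'' are already mutually exclusive, rendering the corollary vacuously true. You note this yourself in your final paragraph. The paper's route, by contrast, genuinely consumes the functional-independence assumption via the cited result, and so does not on its face reveal that the hypotheses are inconsistent. Both arguments are valid; yours is more elementary but exposes that the statement could be strengthened to ``no FLD system admits a $3$-parameter potential,'' which is consistent with the paper's focus throughout on $2$-parameter potentials for FLD systems.
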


\begin{proof} Suppose the set is functionally independent. Then from \cite[equation~(2)]{KKM2007a} at any fixed point there is a potential for any prescribed
values of $V$, $V_x$, $V_y$, $V_z$. However, since the system is FLD the potentials must satisfy $A(x,y,z)V_x+B(x,y,z)V_y+C(x,y,z)V_z=0 $ for $A$, $B$, $C$ not all zero, so the possible derivatives of $V$ are not independent. Contradiction!
\end{proof}

Thus for systems with exactly 5 linearly independent symmetries at most 4 of the 5 FLD generators can form a functionally independent set. However we shall show that there are FLD systems with 2-parameter potentials that admit $>5$ linearly independent and 5 functionally independent 2nd order symmetries in which case Corollary~\ref{corollary1} does not apply.

\begin{Lemma} \label{lemma1} Equations \eqref{defeqns1} imply
\[ \partial_{x_i}\big(c^{ij}_i-c^{ii}_j\big)=0,\qquad \partial_{x_i}\big(c^{ij}_k-c^{ik}_j\big)=0.\]
\end{Lemma}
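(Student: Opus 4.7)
The plan is to derive both identities as differential consequences of \eqref{defeqns1}, exploiting the explicit form $c_k^{ij}=\sum_\ell(\partial_{x_k}f_\ell)a_\ell^{ij}$ together with the Clairaut symmetry of mixed partials, $\partial_{x_m}\partial_{x_k}f_\ell=\partial_{x_k}\partial_{x_m}f_\ell$ for each $\ell$.

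For the first identity $\partial_{x_i}(c_i^{ij}-c_j^{ii})=0$, the case $i=j$ is trivial. When $i\neq j$, the second relation of \eqref{defeqns1} gives $c_j^{ii}=-2c_i^{ij}$ as functions of $x$, so that $c_i^{ij}-c_j^{ii}=3c_i^{ij}$ and the claim reduces to showing $\partial_{x_i}c_i^{ij}=0$. I would establish this by differentiating $c_i^{ii}=0$ with respect to $x_j$ and $c_j^{ii}+2c_i^{ij}=0$ with respect to $x_i$, then expanding both via $c_k^{ij}=\sum_\ell(\partial_{x_k}f_\ell)a_\ell^{ij}$, and using the Clairaut symmetry to exchange the order of partials on the $f_\ell$. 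Comparing the two resulting expressions then causes the terms involving second derivatives of $f_\ell$ and first derivatives of $a_\ell^{ij}$ to cancel against one another, yielding the desired vanishing.

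For the second identity $\partial_{x_i}(c_k^{ij}-c_j^{ik})=0$, the cases in which two of $i,j,k$ coincide reduce, by the symmetry $c_k^{ij}=c_k^{ji}$, either to the first identity or to a trivial $0=0$. In the remaining case when $i,j,k$ are pairwise distinct, I would differentiate the cyclic relation $c_3^{12}+c_1^{23}+c_2^{13}=0$ (with indices appropriately permuted) with respect to $x_i$, combine it with the differentiated forms of the second equation of \eqref{defeqns1} applied with suitable index choices, and once more invoke the Clairaut symmetry to eliminate the mixed second-$f$-derivative contributions. The algebraic combinations that arise then collapse to the claimed identity.

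The main obstacle is the combinatorial bookkeeping: each differentiation of an equation in \eqref{defeqns1} produces both ``second derivative of $f_\ell$'' terms and ``first derivative of $a_\ell^{ij}$'' terms, and one must select the right combinations of differentiated equations and repeatedly re-substitute using \eqref{defeqns1} to ensure that every extraneous contribution cancels.
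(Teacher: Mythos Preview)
The paper states this lemma without proof, so there is no argument of the authors to compare against. Your plan of differentiating the relations \eqref{defeqns1} and invoking Clairaut's theorem via the explicit form $c_k^{ij}=\sum_\ell(\partial_{x_k}f_\ell)a_\ell^{ij}$ is the natural start, but there is a genuine gap. Once you differentiate, each expression $\partial_{x_m}c_k^{ij}$ splits into a ``$(\partial^2 f)\,a$'' piece and a ``$(\partial f)(\partial a)$'' piece. Clairaut symmetry lets you match the second--derivative--of--$f$ pieces between different equations, but the residual $(\partial f)(\partial a)$ terms do \emph{not} cancel among themselves from \eqref{defeqns1} alone, because \eqref{defeqns1} constrains only the combinations $(\partial f)\,a$, not $(\partial f)(\partial a)$. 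Concretely, for $i\neq j$ your strategy reduces the first identity to showing $\partial_{x_i}c_i^{ij}=0$; after exactly the manipulations you describe one is left with an expression proportional to $\sum_\ell(\partial_{x_i}f_\ell)\,\partial_{x_i}a_\ell^{ij}$, which has no reason to vanish from \eqref{defeqns1} and Clairaut alone.

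What is missing is the Killing--tensor condition $\{\mathcal H^0,\mathcal S_\ell^0\}=0$ satisfied by each $a_\ell^{ij}$, which is part of the standing hypotheses of Theorem~\ref{functionallinearindependence1} and supplies relations among the $\partial_{x_m}a_\ell^{ij}$ (the totally symmetrized derivative vanishes up to $\lambda$--dependent terms). Those relations are precisely what allow the leftover $(\partial f)(\partial a)$ contributions to be rewritten as combinations of the $c_k^{ij}$ and then eliminated using \eqref{defeqns1}. Your argument should invoke them explicitly; without that input the cancellation you assert does not take place.
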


A new result is
\begin{Theorem}\label{functionallinearindependence2} Under the hypotheses of Theorem~{\rm \ref{functionallinearindependence1}} there exists a $1$st order Killing vector $\mathcal J$ for $\mathcal H$, i.e., $\{{\mathcal J},{\mathcal H\}=\{ \mathcal J},V\}=0$, of the form \[{\mathcal J}=a_1 p_1+a_2p_2+a_3 p_3+a_4(xp_2-yp_1)+a_5( yp_3-zp_2)+a_6( zp_1-xp_z)\]
for some constants $a_j$, not all zero.
\end{Theorem}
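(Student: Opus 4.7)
My plan is to construct $\mathcal{J}=\xi^ip_i$ explicitly from the coefficient data already extracted in the proof of Theorem~\ref{functionallinearindependence1}, and to show that the components $\xi^i$ form a flat Euclidean Killing vector annihilating both $V$ and $\lambda$. Recall $c_k^{ij}=\sum_\ell(\partial_{x_k}f_\ell)a_\ell^{ij}$: the relations \eqref{defeqns1} amount to the vanishing of the fully symmetrised part $c_k^{ij}+c_i^{jk}+c_j^{ki}=0$, and when substituted into \eqref{3symmetries} its first row collapses to $(3c_1^{12},-3c_2^{12},c_1^{23}-c_2^{13})$, with analogous simplifications for the other two rows. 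The proof of Theorem~\ref{functionallinearindependence1} guarantees at least one such row is non-trivial (otherwise all $f_\ell$ would be constants, contradicting FLD), so I would fix such a row, call its entries $(\xi^1,\xi^2,\xi^3)$, and set $\mathcal{J}:=\xi^ip_i$. The row is by design a coefficient vector of a first-order PDE on $V$, hence $\xi^iV_i=0$ and $\{\mathcal{J},V\}=0$ hold automatically.

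To obtain the claimed translation-plus-rotation form I would verify the flat Killing equation $\partial_i\xi^j+\partial_j\xi^i=0$. The diagonal conditions ($i=j$, no sum) drop out of Lemma~\ref{lemma1} at once: for the first row these read $\partial_1c_1^{12}=0$, $\partial_2c_2^{12}=0$, $\partial_3(c_1^{23}-c_2^{13})=0$, each a direct instance of the lemma under an appropriate index assignment. The off-diagonal conditions do \emph{not} follow from \eqref{defeqns1} or Lemma~\ref{lemma1} alone, and they are the crux of the argument. I would close them by importing the Killing-tensor equations $\{\mathcal{H}^0,\mathcal{S}_\ell^0\}=0$, which are the leading consequence of $\mathcal{S}_\ell$ being a second-order symmetry and impose differential relations on each $a_\ell^{ij}$; fed back through the definition of $c_k^{ij}$ and combined with the identities on $\partial_kf_\ell$, these should furnish the missing cross-derivative identities. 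If a single row of \eqref{3symmetries} does not yield Killing components in a given case, a specific linear combination of the three rows should, with the unwanted off-diagonal terms cancelling by the same mechanism.

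Once $\xi^i=a_i+\omega^i{}_jx^j$ with $\omega$ antisymmetric is established, $\mathcal{J}$ decodes into $a_1p_1+a_2p_2+a_3p_3+a_4(xp_2-yp_1)+a_5(yp_3-zp_2)+a_6(zp_1-xp_3)$ with the $a_j$ not all zero (by non-triviality of the chosen row). Finally, because $\mathcal{J}$ is a flat Killing vector, the bracket $\{\mathcal{J},\mathcal{H}^0\}$ vanishes if and only if $\mathcal{J}\lambda=0$, a single scalar consistency to be extracted from the same Killing-tensor/FLD data used for the off-diagonal equations. The main obstacles are therefore the off-diagonal Killing verification and the verification of $\mathcal{J}\lambda=0$; both require systematic use of the Killing-tensor equations for the $a_\ell^{ij}$ beyond what Lemma~\ref{lemma1} alone supplies, yielding $\{\mathcal{J},\mathcal{H}\}=\{\mathcal{J},V\}=0$ as claimed.
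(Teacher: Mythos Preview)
Your setup matches the paper: you construct $\mathcal{J}$ from a row of \eqref{3symmetries}, you observe $\{\mathcal{J},V\}=0$ is automatic, and you extract the diagonal Killing conditions $\partial_i\xi^i=0$ (no sum) from Lemma~\ref{lemma1}. Where you diverge is precisely the step you flag as the crux. You propose to verify the off-diagonal Killing equations $\partial_i\xi^j+\partial_j\xi^i=0$ directly by feeding the Killing-tensor equations for the $a_\ell^{ij}$ back through the definition of $c_k^{ij}$, and you leave this as a plan rather than a computation.

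The paper sidesteps this entirely. Instead of checking the off-diagonal equations, it computes $\{\mathcal{J},\mathcal{H}^0\}$ directly and shows (using \eqref{defeqns1} and Lemma~\ref{lemma1}) that the result is a scalar multiple of $\mathcal{H}^0$; hence $\mathcal{J}$ is a \emph{conformal} Killing vector of the conformally flat metric. Since the first-order conformal symmetries of $(p_1^2+p_2^2+p_3^2)/\lambda$ coincide with those of the flat case $\lambda=1$, $\mathcal{J}$ lies in the 10-dimensional flat conformal algebra (translations, rotations, dilation, special conformals). The paper then uses the diagonal conditions $\partial_x\mathcal{J}^x=\partial_y\mathcal{J}^y=\partial_z\mathcal{J}^z=0$ as a \emph{selection principle}: among the ten generators, only the six Euclidean ones ($p_i$ and $J_{ij}$) satisfy all three; the dilation $xp_1+yp_2+zp_3$ and the special conformals fail. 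This pins $\mathcal{J}$ down to the stated form without ever touching the off-diagonal equations.

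So your route is not wrong, but it trades a short classification argument for a direct verification you have not carried out, and your hedge (``if a single row does not yield Killing components, a specific linear combination should'') is unnecessary under the paper's approach, where each nonzero row already produces a conformal Killing vector and the diagonal test does the rest. The paper's argument also absorbs the $\mathcal{J}\lambda=0$ issue into the same computation of $\{\mathcal{J},\mathcal{H}^0\}$, rather than leaving it as a separate obstacle.
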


\begin{proof} Let
\[ {\mathcal J}=\big(c_1^{12}-c_2^{11}\big)p_1+\big(c_1^{22}-c_2^{12}\big)p_2+\big(c_1^{23}-c_2^{13}\big)p_3={\mathcal J}^x p_1+{\mathcal J}^y p_2+{\mathcal J}^z p_3,\]
so that the first of equations (\ref{3symmetries}) is $\{{\mathcal J},V\}=0$.
From equations (\ref{defeqns1}) and Lemma \ref{lemma1} we can verify that
\begin{gather*} \big\{ {\mathcal J},{\mathcal H}^0\big\}= -\left[\frac{\big(c^{12}_1-c_2^{11}\big)\lambda_1}{\lambda}+\frac{\big(c^{22}_1-c_2^{12}\big)\lambda_2}{\lambda} +\frac{\big(c^{23}_1-c_2^{13}\big)\lambda_3}{\lambda}\right]{\mathcal H}^0\\
\hphantom{\big\{ {\mathcal J},{\mathcal H}^0\big\}}{}=-\frac{1}{\lambda}\big[ 3c_1^{12}\lambda_1-3c_2^{12}\lambda_2+\big(c_1^{23}-c_2^{13}\big)\lambda_3\big] {\mathcal H}^0,
\end{gather*}
so either ${\mathcal J}=0$ or ${\mathcal J}$ is a conformal symmetry of ${\mathcal H}^0$.
However, from Lemma~\ref{lemma1} we see that \begin{equation}\label{confrequirement} \partial_x {\mathcal J}^x=\partial_y {\mathcal J}^y=\partial_z {\mathcal J}^z=0. \end{equation}
The first order conformal symmetries of ${\mathcal H}^0$ are the same as for the case $\lambda=1$,
and the only such symmetries that satisfy the requirements (\ref{confrequirement}) are
linear combinations of $p_1$, $p_2$, $p_3$ and
\[ J_{12}=xp_2-yp_1,\qquad J_{23}= yp_3-zp_2,\qquad J_{13}=xp_3-zp_1,\] and these would be actual symmetries of
${\mathcal H}^0$ (true conformal symmetries such as $x p_1+y p_2+zp_3$ fail this test).
Thus either ${\mathcal J}$ vanishes or it is a 1st order symmetry of~$\mathcal H$.

Analogous constructions and conclusions can be obtained for the 2nd and 3rd of equations~(\ref{3symmetries}). However, at least one of these equations is nonzero.
\end{proof}

Since any Euclidean coordinate transformation applied to the Hamiltonian $\mathcal H$ takes it into one of similar form
\[ {\tilde{\mathcal H}}=\frac{{\tilde p}^2_1+{\tilde p}^2_2+{\tilde p}^2_3}{\tilde \lambda}+{\tilde V},\]
without loss of generality, we can assume that, up to conjugacy, $\mathcal J$ takes one of the five canonical forms:
\begin{gather} p_1,\quad p_1+{\rm i}p_2,\quad xp_2-yp_1,\quad (xp_2-yp_1)+{\rm i}(yp_3-zp_2),\nonumber\\
 (xp_2-yp_1)+{\rm i}(yp_3-zp_2)+p_3+{\rm i}p_1.\label{Jcan}
\end{gather}

With the same assumptions for FLD systems as in Theorem~\ref{functionallinearindependence1}, let ${\mathcal O}_{{\mathcal H}^0} (r)$ be the set of all subsets ${\mathcal B}$ of $\{{\mathcal S}_{1}={\mathcal H},{{\mathcal S}}_{2},\dots, {\mathcal S}_{t}\}$ with $r+1$ elements such that, after relabeling, there is an FLD relation
\begin{equation}\label{OH0expansion} \hat{\mathcal S}_{0}^0=\sum_{\ell=1}^{r} \hat{f}_{\ell}(x,y,z) \hat{\mathcal S}_{\ell}^0,\end{equation}
and such that
\begin{enumerate}\itemsep=0pt
\item[1)] ${\mathcal H}^0,{\mathcal J}^2\in \operatorname{span} {\mathcal B}$,
\item[2)] $\operatorname{span} \mathcal B\subseteq \operatorname{span} \operatorname{Ad}_{\mathcal J}\mathcal B$,
\item[3)] $\mathcal H$ admits a 2-parameter potential.
\end{enumerate}

In this paper we find all superintegrable Hamiltonians $\mathcal H$ on constant curvature spaces for which
${\mathcal O}_{{\mathcal H}^0}(4)\ne \varnothing $. Note that if $\mathcal H$ admits exactly 5 linearly independent symmetries, all cases are included in ${\mathcal O}_{{\mathcal H}^0}(4)$. If $\mathcal H$ admits more than 5 linearly independent 2nd order symmetries we have no proof of completeness but we have not as yet found a verifiable counterexample.

\section{Euclidean space}\label{5}
We first study the possible FLD 2nd order superintegrable systems in 3D complex Euclidean space. Complex metrics were commonly used in the 19th century. Of particular interest for superintegrability and separation of variables are the paper~\cite{K1887} and the book~\cite{B1894}. B{\^o}cher was the first president of the American Mathematical Society.
The metrics are defined as usual as are the curvature conditions but all the variables are complex. Thus a space is conformally flat if the metric can be expressed as $\lambda(x,y,z)\big({\rm d}x^2+{\rm d}y^2+{\rm d}z^2\big)$ for complex variables~$x$,~$y$,~$z$. The advantage is that one complex system can describe several real forms of this system by specializing the coordinates. For example the complex metric ${\rm d}x^2+{\rm d}y^2+{\rm d}z^2$ is Euclidean for $x$, $y$, $z$ real and Minkowski space for $z={\rm i}w$ for $x$, $y$, $w$ real. In this paper, potentials $V(x,y,z)$ that are real for $x$, $y$, $z$ real live on Euclidean space and potentials that are real for $x$, $y$, $w$ real live on Minkowski space. Every potential belongs to one of these classes. Similar remarks are true for the complex 3-sphere, with real forms the real 3-sphere, and the 3-hyperboloids of one and two sheets.

By relabeling, we can express one of the quadratic parts of
the constants of the motion
${{\mathcal S}}_{0}^0$ as
a~linear combination of the quadratic parts of the remaining~$r$ generators through \eqref{OH0expansion}.
Without loss of generality we can reduce to the case where the expansion~(\ref{OH0expansion}) is unique. The generators ${{\mathcal S}_{0}^0},{{\mathcal S}_{1}^0},
{{\mathcal S}_{2}^0},
\dots,{{\mathcal S}_{r}^0}$ are polynomials in $x$, $y$, $z$ of order at most 2 and are linearly independent.
Thus we can solve for the expansion coefficients in the form $f_{\ell}(x,y,z)=s_{\ell}(x,y,z)/s_{0}(x,y,z)$, $\ell =1,\dots,4$ where $s_{0},s_{1},\dots, s_{r}$ are polynomials in $x$, $y$, $z$ of order at most~2. It follows that
\begin{equation}\label{funclindep}\sum_{a_1,a_2,a_3}A(a_1,a_2,a_3)x^{a_1}y^{a_2}z^{a_3}\equiv s_{0}{{\mathcal S}_{0}^0}-\sum_{\ell=1}^4 s_{\ell} {{\mathcal S}_{\ell}^0}=0,\end{equation} where each coefficient $A(a_1,a_2,a_3)$ must vanish. In particular, the sum of all terms homogeneous of degree $n$ must vanish for each $n=0,\dots, 4$: \[B(n)\equiv \sum_{a_1+a_2+a_3=n}A(a_1,a_2,a_3)x^{a_1}y^{a_2}z^{a_3}=0.\]
Each of the generators $ {{\mathcal S}_{r}^0}$ is a linear combination of terms $J_{ij}J_{k\ell}$, (order~2), $J_{ij}p_k$, (order~1) and~$p_ip_j$, (order~0).

Since we have assumed that the expansion~(\ref{OH0expansion}) is unique, there must be only a single term~$B(N)$ that is not identically zero and each $ {{\mathcal S}_{\ell}^0}$ is homogeneous of degree $0$, $1$, or $2$. Thus each $s_{\ell}$ must be homogeneous of degree~$b$ and each ${{\mathcal S}_{\ell}^0}$ must be homogeneous of degree $c=0,1,2$ where $b+c=N$. This greatly restricts the possibilities for~(\ref{funclindep}).

\subsection{Classification criteria}

In the subsequent five subsections we obtain all FLD-superintegrable bases ${\mathcal B}$ on 3D complex Euclidean space that belong to the class ${\mathcal O}_{{\mathcal H}^0}(4)$. Each such basis is associated with a Hamiltonian ${\mathcal H}={\mathcal H}^0+V$ with a two-parameter potential $V$. We emphasize that ${\mathcal B}$ does not necessarily contain all the (momentum parts of) symmetries of ${\mathcal H}$. We compute $V$ as the general solution of the Bertrand--Darboux equations associated with ${\mathcal B}$. However, the Hamiltonians ${\mathcal H}={\mathcal H}^0+V$ obtained in this way may admit additional symmetries not obtained from ${\mathcal B}$. Additionally, we remark that a particular solution $V_{{\rm p}}$ of the general solution~$V$ may correspond to a Hamiltonian with more symmetries than the Hamiltonian with $V$. We make no attempt to classify these special cases.

The classification is performed modulo complex Euclidean transformations: by the discussion in Section~\ref{4}, the Hamiltonian $\mathcal{H}$ must admit one of the first order symmetries in \eqref{Jcan}. Starting from each of the symmetries in~\eqref{Jcan}, which we denote by $\mathcal{J}$, we use the following algorithm to identify FLD-superintegrable systems.
\begin{enumerate}\itemsep=0pt
\item We compute the action of $\operatorname{Ad}_{\mathcal{J}}$ on a basis of second order symmetries of $\mathcal{H}^0$. We use this to construct a generalized eigenbasis (with respect to $\operatorname{Ad}_{\mathcal{J}}$) of such possible second order symmetries.
\item We then consider 5-element subsets ${\mathcal B}$ of this basis and verify that ${\mathcal B}\in{\mathcal O}_{{\mathcal H}^0}(4)$.
\item For each possible action of $\operatorname{Ad}_{\mathcal{J}}$ on $\mathcal{B}$, we identify all possibilities where 1)~the elements of~$\mathcal{B}$ are homogeneous in the spatial variables, in accordance with the discussion in the previous subsection, 2)~the elements of $\mathcal{B}$ are FLD, 3)~the elements $\mathcal{H}^0$ and $\mathcal{J}^2$ are contained in $\operatorname{span}\mathcal{B}$.
\item For each basis satisfying the criteria in the previous step, we use the Bertrand--Darboux equations to compute the corresponding potential. We require that the potential be 2-parameter functionally independent. In this case $\mathcal{B}\in \mathcal{O}_{\mathcal{H}^0}(4)$. We verify that $\mathcal{H}$ is superintegrable: it must admit at least~4 functionally independent symmetries. The final list of the potentials defining such systems is given in Table~\ref{FLDlist}.
\end{enumerate}

In the case of $\mathcal{J}=p_1$, the space of quadratic forms in $\{p_1,p_2,p_3,J_{12},J_{13},J_{23}\}$, modulo the relation $\mathbf{p}\cdot(\mathbf{p}\times\mathbf{x})=0$, provides a generalized eigenbasis of order two symmetries with respect to $\operatorname{Ad}_{\mathcal{J}}$. Hence we provide details of steps 2-4 of the algorithm above and also show that our examples in Section~\ref{section2} are contained in this case.

The computations involved in the cases of the remaining forms in \eqref{Jcan} are lengthier and we provide only the essential details. In all cases we supply the potentials and the algebra generated by the FLD symmetries.

\subsection[First case: ${\mathcal J}=p_1$]{First case: $\boldsymbol{{\mathcal J}=p_1}$}

Here the centralizer of $\mathcal J$ is the group generated by translation in $y$, $z$ and rotation about the $x$-axis. We can use this freedom to simplify the computation. Since $p_1$ is a symmetry the potential must be of the form $V(y,z)$. Any degree two symmetry can be written as a quadratic form in $\{p_1,p_2,p_3,J_{12},J_{13},J_{23}\}$. Due to the triple product identity $\mathbf{p}\cdot(\mathbf{p}\times\mathbf{x})=0$, the space of such quadratic forms has dimension $21-1=20$.

To be concrete, we write a general symmetry as
\begin{equation}\label{generalsymmetry}
\mathcal{S}=RQR^{\rm T}+F_0(x,y,z),
\end{equation}
where
\begin{equation}\label{Rdefinition}
R=(p_1,p_2,p_3,J_{12},J_{13},J_{23})
\end{equation}
 and
\begin{equation*}
Q=\frac12 \left(\begin{matrix}
2a_1 & a_2 & a_3 & a_4 & a_5 & a_6 \\
a_2 & 2 a_7 & a_8 & a_9 & a_{10} & a_{11} \\
a_3 & a_{8} & 2a_{12} & a_{13} & a_{14} & a_{15} \\
a_4 & a_9 & a_{13} & 2a_{16} & a_{17} & a_{18} \\
a_5 & a_{10} & a_{14} & a_{17} & 2a_{19} & a_{20} \\
a_6 & a_{11} & a_{15} & a_{18} & a_{20} & 2a_{21}
\end{matrix}\right).
\end{equation*}
(To get a true basis of second order symmetries of $\mathcal{H}^0$, we set one of $a_6$, $a_{10}$, $a_{13}$ to zero.)

We use the fact that the adjoint action ${\mathcal S}\to \{p_1,{\mathcal S}\}\equiv \operatorname{Ad}_{p_1}{\mathcal S}$
will map the 5-dimensional space of a solution set into itself. Since this action is essentially differentiation with respect to $x$, it is clear that $\operatorname{Ad}_{p_1}^3=0$, so the generalized eigenvalues of $\operatorname{Ad}_{p_1}$ must all be $0$. Thus the possible Jordan canonical forms for the operator $\operatorname{Ad}_{p_1}$ on a generalized eigenbasis of solutions ${\mathcal S}$ are
\begin{gather} \left(\begin{matrix} 0&1&0&0&0\\
0&0&1&0&0\\ 0&0&0&0&0\\ 0&0&0&0&0\\ 0&0&0&0&0 \end{matrix}\right),\qquad
\left(\begin{matrix} 0&1&0&0&0\\
0&0&1&0&0\\ 0&0&0&0&0\\ 0&0&0&0&1\\ 0&0&0&0&0 \end{matrix}\right),\qquad
 \left(\begin{matrix} 0&1&0&0&0\\
0&0&0&0&0\\ 0&0&0&1&0\\ 0&0&0&0&0\\ 0&0&0&0&0 \end{matrix}\right),\nonumber\\
 \left(\begin{matrix} 0&1&0&0&0\\
0&0&0&0&0\\ 0&0&0&0&0\\ 0&0&0&0&0\\ 0&0&0&0&0 \end{matrix}\right),\qquad
\left(\begin{matrix} 0&0&0&0&0\\
0&0&0&0&0\\ 0&0&0&0&0\\ 0&0&0&0&0\\ 0&0&0&0&0 \end{matrix}\right).\label{formabc}
\end{gather}
 We get 5 different forms depending on the smallest integer $k$ such that $\operatorname{Ad}_{p_1}^k=0$. We will consider each of these 5 forms in turn to determine its implications for the generalized eigenbasis of solutions ${\mathcal S}$.

\subsubsection{Form (\ref{formabc}a)}
We first look at the possibilities for form (\ref{formabc}a). In this case $\operatorname{Ad}_{p_1}^2\ne 0$ so that part of the eigenbasis must be $\{ {\mathcal L}, {\mathcal L}_1 ,{\mathcal L}_2\} $, symmetries that generate a chain of length~3.

The action of $\operatorname{Ad}_{p_1}$ is nontrivial on only two of components of $R$ in~\eqref{Rdefinition}:
\begin{equation}\label{nontrivialPoissonbrackets}
\operatorname{Ad}_{p_1} J_{12}=-p_2,\qquad \operatorname{Ad}_{p_1} J_{13}=-p_3.
\end{equation}
The action of $\operatorname{Ad}_{p_1}$ on any monomial in $\mathcal{S}$ can then be determined from \eqref{nontrivialPoissonbrackets} and the Leibniz property. We find that
\begin{gather}\label{triplet1}
\mathcal{L}=a_{16}J_{12}^2+a_{17}J_{12}J_{13}+a_{19}J_{13}^2+W
\end{gather}
(where here and below, the $a_{ij}$ are assumed to be arbitrary parameters) is the most general homogeneous solution of $\operatorname{Ad}_{p_1}^3=0$. Starting from $\mathcal{L}$, a chain is generated with
\begin{gather}
\mathcal{L}_1= \operatorname{Ad}_{p_1}\mathcal{L}=-2a_{16} p_2 J_{12}-a_{17}(p_2 J_{13}+p_3 J_{12})-2a_{19} p_3 J_{13}+W_1,\nonumber \\
\mathcal{L}_2= \operatorname{Ad}_{p_1}\mathcal{L}_1=2a_{16}p_2^2+2a_{17} p_2 p_3+2a_{19}p_3^2+W_2.\label{triplet2}
\end{gather}
 where we omit the expressions for the functions $W$, $W_1$, $W_2$. In addition there must be 2 eigenfunctions of $\operatorname{Ad}_{p_1}$ with eigenvalue $0$ and independent of ${\mathcal L}_2$.

The symmetries that are annihilated by $\operatorname{Ad}_{p_1}$ take the form
\begin{gather}
{\mathcal K}= b_1 p_1^2+b_2 p_1 p_2+b_3 p_1 p_3+b_6 p_1 J_{23}+b_7 p_2^2+b_8 p_2 p_3+b_{11} p_2 J_{23}\nonumber\\
\hphantom{{\mathcal K}=}{} +b_{12} p_3^2+b_{15} p_3 J_{23}+b_{21} J_{23}^2 +U, \label{singlet}
\end{gather}
where the $b_j$, analogous to $a _j$, are constants to be determined, and $U$ is the potential.

${\mathcal L}$ is homogeneous of order 2 in the variables $x$, $y$, $z$. We consider cases for the form of~$\mathcal{L}_2$. A~very special case is that where, by a rotation if necessary, ${\mathcal L}_2$ takes the form where $a_{16}=a_{19}\ne 0$, $a_{17}=0$. Thus we have ${\mathcal L}_2^0=2a_{16} ({\mathcal H}^0-{\mathcal J}^2)$. Always ${\mathcal H}$ can be assumed to be a basis symmetry, so to achieve form~(\ref{formabc}a) we have to select a symmetry $\mathcal K$ that is linearly independent of the 4 forms already exhibited.

If we choose $\mathcal K$ of order 2 in the spatial variables, so ${\mathcal K}=
b_{21}J_{23}^2$ it is straightforward to show that ${\mathcal B}=\mathcal{L},\mathcal{L}_1,\mathcal{L}_2,\mathcal{K},\mathcal{H}$ is an FLD basis. The Bertrand--Darboux equations for $V(y,z)$ and the potentials associated with these symmetries are obtained from requiring
\begin{equation*}
\{\mathcal{H},\mathcal{L}\}=\{\mathcal{H},\mathcal{L}_1\}=\{\mathcal{H},\mathcal{L}_2\}=\{\mathcal{H},\mathcal{K}\}=0.
\end{equation*}

We consider the equations for $V(y,z)$ and $W(x,y,z)$ arising (as coefficients of $p_1$, $p_2$, $p_3$) from $\{\mathcal{H},\mathcal{L}\}$:
\begin{gather*}
a_{16}x z V_z+a_{16}xy V_y+W_x=0, \qquad
 a_{16}x^2V_y-W_y=0,\qquad
 a_{16}x^2V_z-W_z=0.
\end{gather*}
The second and third equations are satisfied when $W(x,y,z)=a_{16}V(y,z)+W_{00}(x)$, where $W_{00}$ is at this point arbitrary. Upon substituting this form for $W$ into the first equation, we observe that we must have $W_{00}(x)=c_1 x^2+ c_2$, for some constants $c_1$, $c_2$, to obtain a well-defined equation for $V(y,z)$. The general solution of the first equation is then
\begin{equation}\label{soln1} V(y,z)=\frac{F(z/y)}{y^2}\end{equation}
for $F$ an arbitrary function (up to an additive constant, $-c_1$, which we set to zero without loss of generality). The Jacobi identity guarantees that this potential is compatible with the symmetries $\mathcal{L}_1$, $\mathcal{L}_2$. We can verify compatibility with $\mathcal{K}$ directly: a function $U$ of $x$, $y$, $z$ can be found so that
$\{\mathcal{H},\mathcal{K}\}=0$.

The Calogero potential \eqref{Cal1} belongs to the class \eqref{soln1}. Indeed, under the Jacobi transformation
\begin{equation} \label{Jacobi}
x=\tfrac{1}{\sqrt{3}}(r_1+r_2+r_3),\qquad y=\tfrac{1}{\sqrt{2}}(r_2-r_1),\qquad z=\tfrac{1}{\sqrt{6}}(2 r_3-r_2-r_1),\end{equation}
we obtain the Calogero potential \eqref{Cal1} in variables $r_1$, $r_2$, $r_3$ by choosing
\[ F(w)=\frac{\beta}{2(1-\sqrt{3}w)^2}+\frac{\gamma}{2(1+\sqrt{3}w)^2}+\frac{\alpha}{2}.\]

If we choose $\mathcal K$ of order 1, so that $
{\mathcal K}=b_{11} p_2 J_{23}+b_{15} p_3 J_{23} +U$
where $|b_{11}|+|b_{15}|>0$, we can verify that the symmetries ${\mathcal B}=\{\mathcal{L},\mathcal{L}_1,\mathcal{L}_2,\mathcal{K},\mathcal{H}\}$ is an FLD basis and solve the Bertrand--Darboux equations to obtain
\begin{equation}\label{soln2} V(y,z)=\frac{F_{\rm p}\big(\frac{y}{z}\big)}{y^2}, \qquad F_{\rm p}(t)\equiv \frac{\beta_1}{(t+q)^2}+\frac{\beta_2 (qt-1)}{ (t+q)^2\sqrt{1+t^2}},
\end{equation}
where $\beta_1$, $\beta_2$ are arbitrary parameters and $q\equiv b_{11}/b_{15}$. Similarly, applying the Jacobi transformation (\ref{Jacobi}) to (\ref{soln2}) we can obtain a solution adapted to translation invariance.

\looseness=-1 If we choose $\mathcal K$ of order 0, there is no 3-parameter solution for the potential.
The other possibilities for $\mathcal L$ of order~2 are that 1)~${\mathcal L}_2$ can be transformed so that $a_{17}=a_{19}=0$ and the one chains are $\mathcal H$ and $p_1^2$, in which case there is no 3-parameter potential, and 2)~${\mathcal L}_2$ can be transformed so that $a_{17}=2ia_{16}$, $a_{19}=-a_{16}$ and the one chains are $\mathcal H$ and~$p_1^2$, which is not FLD.

\subsubsection{Form (\ref{formabc}b)} Here there is one chain of length 3 and one chain of length 2. The general form for the chain of length 3 is (\ref{triplet1})--(\ref{triplet2}) again. The general form for a chain of length 2 is
\begin{gather}
\mathcal{L}_1'= b_9 p_2 J_{12}+b_{10} p_2 J_{13}+b_{13} p_3 J_{12}+b_{14} p_3 J_{13}+b_{18} J_{12}J_{23}+b_{20} J_{13}J_{23}+W_3,\nonumber \\
\mathcal{L}_2'= \operatorname{Ad}_{p_1}\mathcal{L}_1'=-b_9 p_2^2-b_{10}p_2p_3-b_{13}p_2p_3-b_{14}p_3^2-b_{18}p_2J_{23}-b_{20} p_3 J_{23}+W_4,\label{doublet}
\end{gather}
where $W_3$ and $W_4$ are potentials that will play no role in our analysis. We consider three cases based on the order (in the spatial variables) of $\mathcal{L}_1'$

{\bf Case}: ${\mathcal L}'_1$ of order 2. In this case we have $b_9=b_{10}=b_{13}=b_{14}=0$ so that
so that ${\mathcal L}_1'$ takes the form
$ {\mathcal L}_1'=b_{18} J_{12}J_{23}+b_{20} J_{13}J_{23}+W_3$ and~${\mathcal L}_2'$ takes the form $-b_{18}p_2 J_{23}-b_{20}p_3 J_{23}+W_4$.
Since both $\mathcal H$ and $p_1^2$ are of order~0, and since they both must be included in form~(\ref{formabc}b), this case cannot occur.

{\bf Case}: ${\mathcal L}'_1$ of order 1. In this case we have $b_{18}=b_{20}=0$ and
$\mathcal{L}_1'=b_9 p_2 J_{12}+b_{10} p_2 J_{13}+b_{13} p_3 J_{12}+b_{14} p_3 J_{13}+W_3$, $\mathcal{L}_2'=-b_9 p_2^2-b_{10}p_2p_3-b_{13}p_2p_3-b_{14}p_3^2$. However, there is no choice of the surviving parameters $a_j$ and $b_j$ so that $\mathcal{H}$ or $p_1^2$ is contained in $\operatorname{span}\{\mathcal{L}_2,\mathcal{L}_2'\}$ and this case cannot occur.

{\bf Case}: ${\mathcal L}_1'$ of order 0. This case cannot occur since ${\mathcal L}_1'$ vanishes.

Thus we conclude that form (\ref{formabc}b) does not occur.

\subsubsection{Form (\ref{formabc}c)}
Now we have 2 chains of length 2 and one of length~1.
The general form for a chain of length~2 is~\eqref{doublet}. We use the convention that the first chain of length two, $\{\mathcal{L}_1,\mathcal{L}_2\}$, has parameters~$a_j$ and the second chain of length two, $\{\mathcal{L}_1',\mathcal{L}_2'\}$, has parameters~$b_j$.

The general form for a chain of length 1 is \eqref{singlet}.

It is not possible for both ${\mathcal L}_1$ and ${\mathcal L}_1'$ to be of order 2 since then there would only be one symmetry of order 0, not enough to contain both ${\mathcal H}$ and $p_1^2$. We perform case-based analysis on the allowable cases.

{\bf Case}: ${\mathcal L}_1$ of order 2,\ ${\mathcal L}'_1$ of order 1. This implies that ${\mathcal K}$ must be of order 0, so that ${\mathcal H}$ and $p_1^2$ can be contained in the spanning set. We consider the symmetry $\mathcal{L}_2=-a_{18} p_2 J_{23}-a_{20} p_3 J_{23}$.

By rotation of coordinates about the $z$-axis we can achieve one of the forms $a_{20}\ne 0$, $a_{18}=0$ or $a_{20}\ne 0$, $a_{18}=-{\rm i}a_{20}$. For the second form the basis is not FLD, so can be ruled out. For the first form the basis is FLD but fails the requirement of yielding a 2-parameter potential depending on 2 functionally independent coordinates.

{\bf Case}: Both ${\mathcal L}_1$ and ${\mathcal L}'_1$ are of order 1. Then, since $p_1^2$ and $\mathcal H$ are always basis vectors, the remaining basis symmetry $\mathcal{K}$ must be of order 0. It can be chosen as either $p_2^2$ or $(p_2+{\rm i}p_3)^2$. In the 1st case we determine all possible choices of basis vectors such that the set is FLD. There are only 4 general cases and we verify that none of them define a superintegrable system, i.e., yields a 2-parameter potential. In the 2nd case there are 9 possible FLD families, but they all fail the symmetry test.

\subsubsection{Form (\ref{formabc}d)}
Here we have 1 chain of length 2 and 3 chains of length 1.

The general form for a chain of length 2 is \eqref{doublet} while the general form for a chain of length~1 is~\eqref{singlet}.

There are 2 basic cases: 1) ${\mathcal L}_1'$ is of order 2, ${\mathcal L}_2'$ is of order 1 and ${\mathcal K}$ is of orders, 2, 1, or 0; 2)~${\mathcal L}_1'$~is of order 1, ${\mathcal L}_2'$ is of order 0 and ${\mathcal K}$ is of orders, 2, 1, or 0. We check all of the possibilities and find the Hamiltonian $\mathcal{H}=p_x^2+p_y^2+p_z^2+ V(y,z)$, with
\begin{equation}\label{Minksoln}
V(y,z)=b(z-{\rm i}y)+F(z+{\rm i}y),\end{equation}
where $b$ is a free constant and $F$ is an arbitrary function. The corresponding FLD basis is ${\mathcal B}=\big\{{\mathcal H}^0, {\mathcal J}^2,({\rm i}p_2+p_3)^2, p_1({\rm i} p_2+p_3),(z+{\rm i} y)^2 p_1^2, p_1({\rm i}J_{12}+J_{13}) \big\}$. The Minkowski example in Section~\ref{section2} is a special case of potential \eqref{Minksoln}. Indeed, under the complex orthogonal change of coordinates
 \[ x =-2{\rm i} r_1 ,\qquad y=\tfrac12(r_1+r_2-(1-{\rm i})r_3),\qquad z=\tfrac{{\rm i}}{2}(r_1-r_2-(1-{\rm i})r_3) \]
the potential \eqref{Minksoln} becomes that in~\eqref{Mink_Hamiltonian} when we choose $F(w)=\beta w+\gamma w^2$ and $b=\alpha$.

 A special case of (\ref{Minksoln}) with increased symmetry is \begin{equation}\label{Minksolna}
V(y,z)=b_1(z-{\rm i}y)+b_2(z+{\rm i}y)^2.\end{equation}

 Another case is
\begin{equation} \label{soln4a} V(y,z)= b_1z+\frac{b_2}{y^2}.\end{equation}

A third FLD basis is $\big\{{\mathcal H}^0,p_1^2,p_2^2,p_1p_2,p_1J_{12} \big\}$ with corresponding potential
 \begin{equation} \label{soln4b} V(y,z)= by+F(z),\end{equation}
 where $F$ is an arbitrary function, and $b$ is an arbitrary constant. A special case with increased symmetry is
\begin{equation} \label{soln4c} V(y,z)= b_1y+b_2z.\end{equation}

\begin{Remark}The symmetry algebras of the Hamiltonians corresponding the potentials~\eqref{Minksolna} and~\eqref{soln4c} are omitted below due to their complexity.
\end{Remark}

\subsubsection{Form (\ref{formabc}e)}
Here we have 5 chains of length 1. The possibilities are 1) 1 symmetry of order 2, 2 symmetries of order 1 and 2 symmetries of order 0; 2)~1~symmetry of order 2, 1 symmetry of order 1 and 3 symmetries of order 0; 3)~2 symmetries of order 1 and 3 symmetries of order 0; 4) 1 symmetry of order 1 and 4 symmetries of order~0. In all cases the systems are FLD but they do not admit 2-parameter functionally independent potentials.

\subsubsection{Structure algebras}
For the generalized Calogero system (\ref{soln1}) a basis for the generators is
\begin{gather*}
\mathcal{J}=p_1,\qquad {\mathcal S}_{1}=\mathcal{H}= p_1^2+p_2^2+p_3^2+\frac{F\big(\frac{y}{z}\big)}{y^2},\qquad {\mathcal S}_{2}=p_1^2, \\
{\mathcal S}_{3}=\frac{1}{2} J_{23}^2+\frac{F\big(\frac{y}{z}\big)y^2+F\big(\frac{y}{z}\big)z^2}{2y^2}, \qquad {\mathcal S}_{4}=\frac12\big(J_{12}^2+J_{13}^2\big)+\frac{ x^2F\big(\frac{y}{z}\big)}{2y^2}, \\
 {\mathcal S}_{5}=p_2 J_{12}+p_3 J_{13}+\frac{xF\big(\frac{y}{z}\big)}{y^2}.
\end{gather*}
The nonzero commutators of the generators are
\begin{gather}\label{generalizedCalogeroAlgebra}
 \{\mathcal{J},{\mathcal S}_4\}=-{\mathcal S}_5,\qquad \{\mathcal{J},{\mathcal S}_5\}={\mathcal J}^2-{\mathcal H}, \qquad
 \{{\mathcal S}_4,{\mathcal S}_5\}=-2{\mathcal J}{\mathcal S}_3-2{\mathcal J}{\mathcal S}_4,
 \end{gather}
and the functional relationship is{\samepage
\begin{equation}\label{generalizedCalogeroFLD}
 x^2{\mathcal S}_1^0-\big(x^2+y^2+z^2 \big) {\mathcal S}_2^0 +2{\mathcal S}_{4}^0-2x {\mathcal S}_{5}^0=0.
\end{equation}
Note that both $\mathcal H$ and ${\mathcal S}_3$ lie in the center of this algebra.}

For the system (\ref{soln2}) a basis for the 1st and 2nd order generators is
\begin{gather*}
 {\mathcal J}=p_1,\qquad {\mathcal S}_{1}={\mathcal H},\qquad {\mathcal S}_2={\mathcal J}^2,\qquad {\mathcal S}_3=\frac12 J_{23}^2+\frac{F_{\rm p}\big(\frac{y}{z}\big)\big(y^2+z^2\big)}{2y^2},\\
 {\mathcal S}_4=\frac12\big(J_{12}^2+J_{13}^2\big)+\frac{x^2 F_{\rm p}\big(\frac{y}{z}\big)}{2y^2},\qquad {\mathcal S}_5=p_2 J_{12}+p_3 J_{13}+ \frac{x F_{\rm p}\big(\frac{y}{x}\big)}{y^2}, \\
 {\mathcal S}_6=(qJ_{12}+ J_{13})J_{23}+W_6,\qquad {\mathcal S}_7= (qp_2+p_3) J_{23}+W_7,
\end{gather*}
where we omit the complicated forms of the potentials $W_6$, $W_7$. Since the potential-free parts of the generators satisfy \eqref{generalizedCalogeroFLD}
the set of generators is FLD. The subset $\{{\mathcal J},{\mathcal H},{\mathcal S}_{1},\dots,{\mathcal S}_{5}\}$ generates a closed quadratic algebra with nonzero relations \eqref{generalizedCalogeroAlgebra}. However, if any linear combination of~${\mathcal S}_{6}$,~${\mathcal S}_{7}$ is added to the generators, a new 3rd order symmetry is produced that is not a~polynomial in the generators, so the resulting algebra doesn't close at second order.

\begin{Remark}
The set of symmetries $\{\mathcal{S}_{1},\dots,\mathcal{S}_{7}\}$ contains 5 independent symmetries. However, the set of FLD symmetries $\{{\mathcal L},{\mathcal L}_1,{\mathcal L}_2,{\mathcal H},{\mathcal K}\}$, equivalent to $\{\mathcal{S}_{1},{\mathcal S}_2,{\mathcal S}_4,{\mathcal S}_5,{\mathcal S}_7\}$ via a general linear transformation, contains only 4 independent symmetries (as is the maximum possible by Corollary~\ref{corollary1}). The symmetries $\mathcal{S}_{3}$, $\mathcal{S}_{6}$ are obtained in addition to the FLD symmetries by seeking all 2nd order, linearly independent symmetries of the potential~\eqref{soln2}.
\end{Remark}

For the generalized Minkowski system (\ref{Minksoln}) it is convenient to pass from the original variables $\{x,y,z\}$ to new variables $\{X,Y,Z\}$ where
$ X=x$, $Y=z-{\rm i}y$, $Z=z+{\rm i}y$.
The Hamiltonian then can be written as
${\mathcal H}=p_X^2+4p_Yp_Z+bY+F(Z)$.
 The generating symmetries are
 \begin{gather*}
 {\mathcal J} =p_X, \qquad \mathcal{S}_{1}= {\mathcal H}=p_X^2+4p_Yp_Z+bY+F(Z), \qquad \mathcal{S}_{2}=\mathcal{J}^2, \\
 {\mathcal S}_{3}=Zp_X^2-2Xp_Xp_Y-\tfrac12 b X^2,\qquad {\mathcal S}_{4}=p_Xp_Y+\tfrac12 bX,\qquad {\mathcal S}_{5}=p_Y^2+\tfrac12 bZ,
 \end{gather*}
and the nonzero structure relations are
\[\{{\mathcal J},{\mathcal S}_{3}\}=2{\mathcal S}_{4},\qquad \{{\mathcal J},{\mathcal S}_{4}\}=-\tfrac{b}{2},\qquad \{{\mathcal S}_{3},{\mathcal S}_{4}\}=-2{\mathcal J}{\mathcal S}_{5},\]
with ${\mathcal H}$ in the center of the algebra.
The potential-free parts of the generators satisfy
$-z{\mathcal J}^2+{\mathcal S}_{3}^0+2x{\mathcal S}_{4}^0=0$,
so the system is FLD.

For the system (\ref{soln4a}) the generating symmetries are
\begin{gather*}
{\mathcal J}=p_1,\qquad {\mathcal S}_1={\mathcal H}=p_1^2+p_2^2+p_3^2+b_1 z+\frac{b_2}{y^2},\qquad {\mathcal S}_2={\mathcal J}^2, \\
{\mathcal S}_3=p_1 J_{13}+\frac{b_1 x^2}{4},\qquad {\mathcal S}_4=p_1 p_3+\frac{b_1 x}{2},\qquad {\mathcal S}_5=J_{12}^2+\frac{b_2x^2}{y^2}, \\
{\mathcal S}_6=2 p_2 J_{12}+\frac{2b_2 x}{y^2},\qquad {\mathcal S}_7=2p_2^2+\frac{2b_2}{y^2}, \qquad {\mathcal S}_8=p_2 J_{23}+\frac{b_1 y^2}{4}-\frac{b_2 z}{y^2}.
\end{gather*}
Since the potential-free parts of the generators satisfy
$z {\mathcal S}_{2}^0+{\mathcal S}_{3}^0-x {\mathcal S}_4=0$,
the set of generators is FLD. The subset $\{{\mathcal J},{\mathcal S}_1,{\mathcal S}_{2},{\mathcal S}_{3},{\mathcal S}_{4},{\mathcal S}_{7}\}$ generates a closed quadratic algebra with nonzero relations:
\begin{gather*}
\{{\mathcal J},{\mathcal S}_3\}=-{\mathcal S}_4,\qquad \{{\mathcal J},{\mathcal S}_4\}=-\tfrac{b_1}{2},\qquad \{ {\mathcal S}_2,{\mathcal S}_3\}=-2{\mathcal J}{\mathcal S}_4,\\
\{ {\mathcal S}_2,{\mathcal S}_4\}=-b_1 {\mathcal J},\qquad \{{\mathcal S}_3,{\mathcal S}_4\}={\mathcal J}\left({\mathcal S}_1-\tfrac12{\mathcal S}_7-2{\mathcal S}_2\right).
\end{gather*}
However, if any linear combination of ${\mathcal S}_{5}$, ${\mathcal S}_{6}$, ${\mathcal S}_8$ is added to the generators, a new 3rd order symmetry is produced that is not a polynomial in the generators, so the resulting algebra doesn't close at second order.

For the system (\ref{soln4b}) the generating symmetries are
\begin{gather*}
 {\mathcal J}=p_1,\qquad {\mathcal S}_{1}={\mathcal H}=p_1^2+p_2^2+p_3^2+b y+F(z),\qquad {\mathcal S}_{2}=p_1^2,\\
 {\mathcal S}_{3}= -y p_1^2+x p_1 p_2+\frac{b x^2}{4},\qquad {\mathcal S}_{4}=p_1 p_2+ \frac{b x}{2},\qquad
 {\mathcal S}_{5}=p_2^2+b y,
\end{gather*}
and the nonzero structure relations are
\begin{gather*}
 \{ \mathcal{J},\mathcal{S}_3\}=-{\mathcal S}_4,\qquad \{ \mathcal{J},\mathcal{S}_3\}=-\tfrac{b}{2}, \qquad \{\mathcal{S}_2,\mathcal{S}_3\}=-2\mathcal{J}{\mathcal S}_4,\qquad \{\mathcal{S}_2,\mathcal{S}_4\}=-b\mathcal{J}, \\
 \{\mathcal{S}_3,\mathcal{S}_4\}=-\mathcal{J}(\mathcal{S}_2-\mathcal{S}_4),\qquad \{\mathcal{S}_3,\mathcal{S}_5\}=-2 \mathcal{J}{\mathcal S}_4,\qquad \{\mathcal{S}_4,\mathcal{S}_5\}=-b\mathcal{J}.
\end{gather*}
The potential-free parts of the generators satisfy $y \mathcal{J}^2+x\mathcal{S}_3^0-\mathcal{S}_4^0=0$, so the system is FLD.

\subsection[Second case: ${\mathcal J}=p_1+{\rm i}p_2$]{Second case: $\boldsymbol{{\mathcal J}=p_1+{\rm i}p_2}$}\label{symmetry2}

We introduce appropriate new coordinates $\{\eta,\xi, z\}$ where
$ x=\frac12(\xi+\eta)$, $y=\frac{{\rm i}}{2}(\eta-\xi)$, $z=z$.
In the new coordinates the 1st order symmetries for the potential-free case are:
\begin{gather*} p_1+{\rm i}p_2=2p_\eta={\mathcal J},\qquad p_2={\rm i}(p_\xi-p_\eta),\qquad J_{12}={\rm i}(\xi p_\xi-\eta p_\eta),\\
 J_{13}=\tfrac12(\eta+\xi)p_z-z(p_\eta+p_\xi),\qquad J_{23}=\tfrac{{\rm i}}{2}(\xi-\eta)p_z+{\rm i}z(p_\eta-p_\xi).
\end{gather*}
In this case $\operatorname{Ad}_{p_1+{\rm i}p_2}^3=0$. For convenience we prefer to work with $\tilde{J}=p_\eta=(p_1+{\rm i}p_2)/2$. The canonical forms associated with $\operatorname{Ad}_{\tilde{J}}^3=0$ are again~\eqref{formabc}.

\begin{Remark}A basis of second order symmetries in this case is again given by~\eqref{generalsymmetry}. The formulas for the momentum parts of the symmetry operators appearing below are most naturally expressed in terms of $\{p_1,p_2,p_3,J_{12},J_{13},J_{23}\}$, as before. However, the potentials we obtain are most naturally expressed in terms of the new coordinates $\{\eta,\xi,z\}$. We take this approach below and in Sections~\ref{symmetry3} and~\ref{symmetry4}.
\end{Remark}

$\operatorname{Ad}_{\tilde{J}}$ has nontrivial action on three components of $R$ in~\eqref{Rdefinition}:
\begin{equation*}
\operatorname{Ad}_{\tilde{J}} J_{12}={\rm i}(p_1+{\rm i} p_2)/2={\rm i} p_{\eta},\qquad \operatorname{Ad}_{\tilde{J}} J_{13}=-p_3/2,\qquad \operatorname{Ad}_{\tilde{J}} J_{23}=-{\rm i} p_3/2.
\end{equation*}
From here we can construct a convenient generalized eigenbasis for the 20-dimensional space of symmetries:
\begin{gather*}
L_1= -\tfrac12 J_{12}^2, \qquad
L_2=\tfrac{{\rm i}}{2} J_{12}(J_{13}-{\rm i} J_{23}), \qquad
L_3=2J_{13}^2,
\\
M_1= -\tfrac{{\rm i}}{2}(p_1+{\rm i} p_2)J_{12},\qquad
M_2=-\tfrac14(p_1+{\rm i} p_2)(J_{13}-{\rm i} J_{23})-\tfrac{{\rm i}}{2}p_3 J_{12}, \\
M_3= -2p_3 J_{13},\qquad M_4=J_{13}^2+J_{23}^2, \qquad M_5={\rm i} J_{12}(J_{13}+{\rm i} J_{23}),\\
M_6= {\rm i} p_2(J_{13}-{\rm i} J_{23}), \qquad
M_7=-2{\rm i}(p_1-{\rm i} p_2)J_{12}-p_3(J_{13}-{\rm i} J_{23}),
\\
N_1=\tilde{J}^2=\frac14(p_1+{\rm i} p_2)^2, \!\qquad
N_2= \tfrac12 (p_1+{\rm i} p_2) p_3, \!\qquad
N_3= p_3^2, \!\qquad
N_4= -p_3(J_{13}+{\rm i} J_{23}),\\
N_5= -\tfrac12(p_1+{\rm i} p_2)(J_{13}+{\rm i} J_{23}),\qquad
N_6=-{\rm i} p_2 p_3, \qquad
N_7= \mathcal{H}^0=p_1^2+p_2^2+p_3^2,\\
N_8= (J_{13}+{\rm i} J_{23})^2,\qquad
N_9=-\tfrac12(p_1-{\rm i} p_2)(J_{13}+{\rm i} J_{23}),\qquad
N_{10}=\tfrac14(p_1-{\rm i} p_2)^2,
\end{gather*}
where the 3-chains and 2-chains are $\{L_1,M_1,N_1\}$, $\{L_2,M_2,N_2\}$, $\{L_3,M_3,N_3\}$, $\{M_4,N_4\}$, \linebreak $\{M_5,N_5\}$, $\{M_6,N_6\}$, and $\{M_7,N_7\}$. $N_8$, $N_9$, and $N_{10}$ are 1-chains.

\subsubsection{Form (\ref{formabc}a)}
Here we have a 3-chain and two 1-chains, one of which must be $\mathcal{H}^0$. There are two cases to consider. Either the terminal element of the three chain or the second 1-chain must be $N_1=\tilde{\mathcal{J}}^2$.

In the first case, the 3-chain is $\{L_1+\beta_1 M_4+\beta_2 M_5+\gamma N_8,M_1+\beta_1 N_4+\beta_2 N_5,N_1\}$ (where here and below Greek letters with subscripts are arbitrary parameters analogous to the $a_{ij}$ in the previous subsection; they are fixed by requiring certain combinations of them are FLD) and the 1-chain is one of $\mu_1 N_2+\mu_2 N_3+\mu_3 N_6+\mu_4 N_{10}$, $\mu_1 N_4+\mu_2 N_5+\mu_3 N_9$, or $N_8$ (in which case we can take $\gamma=0$ by a canonical form-preserving change of basis). The first 1-chain possibility is FLD when 1)~$\beta_1=-1/4$, $\mu_2=\mu_3=\mu_4=0$, or 2)~$\beta_1=0$, $\gamma=\beta_2^2/2$, $\mu_1=2\beta_2(2\mu_2-2\beta_2^2-1)$, $\mu_3=-2\beta_2$, $\mu_4=1$ or 3) $\beta_1=\mu_3=\mu_4=0$, $\mu_1=4\beta_2\mu_2$, or 4) $\beta_1=\beta_2=\mu_1=\mu_3=\gamma\mu_4=0$. The third subcase with $\gamma=\beta_2^2/2$ and the fourth subcase with $\gamma=\mu_4=0$ lead to the admissible potentials
\begin{equation}\label{case2_potential1}
V(\xi,z)=\frac{b}{\xi^2}+F(q \xi+z)
\end{equation}
and
\begin{equation}\label{case2_potential2}
V(\xi,z)=\frac{b}{\xi^2}+F(z),
\end{equation}
respectively. Note that \eqref{case2_potential2} is special case of~\eqref{case2_potential1} with increased symmetry.

The second 1-chain possibility is FLD when $\mu_1=\mu_3=0$ and $\beta_1=-1/4$ but does not lead to an admissible potential.

The third 1-chain possibility is FLD when $\beta_1=-1/2$ and $\beta_2=0$, leading to the admissible potential
\begin{align}\label{case2_potential3}
V(\xi,z)=\frac{F(z/\xi)}{\xi^2}. \end{align}
In the second case, the 3-chain is $\{\alpha_1 L_1+\alpha_2 L_2+\alpha_3 L_3+\beta_1 M_4+\beta_2 M_5+\gamma N_8,\alpha_1 M_1+\alpha_2 M_2+\alpha_3 M_3+\beta_1 N_4+\beta_2 N_5,\alpha_1 N_1+\alpha_2 N_2+\alpha_3 N_3\}$. This case is not FLD for any choice of parameters.

\subsubsection{Form (\ref{formabc}b)}
Here we have one 3-chain and one 2-chain. The 3-chain must be $\big\{L_1+\beta_1 M_4+\beta_2 M_5+\gamma N_8,M_1+\beta_1 N_4+\beta_2 N_5,N_1=\tilde{\mathcal{J}}^2\big\}$ and the 2-chain must be $\big\{M_7+\mu_1 N_4+\mu_2 N_5+\mu_3 N_9, N_7=\mathcal{H}^0\big\}$. The symmetries are not FLD for any choice of parameters.

\subsubsection{Form (\ref{formabc}c)}
Here we have two 2-chains and a single 1-chain. There are three cases to consider: the terminal elements of the 2-chains are $\tilde{\mathcal{J}}^2$ and $\mathcal{H}^0$, one 2-chain terminates in $\tilde{\mathcal{J}}^2$ and the 1-chain is $\mathcal{H}^0$, one 2-chain terminates with~$\mathcal{H}^0$ and the 1-chain is $\tilde{\mathcal{J}}^2$.

In the first case, the 2-chains are $\{M_1+\beta_1 N_4+\beta_2 N_5+\beta_3 N_9,N_1\}$ and $\{M_7+\gamma_1 N_4+\gamma_2 N_5+\gamma_3 N_9,N_7\}$ and the 1-chain is one of $N_8$, $\mu_1 N_4+\mu_2 N_5+\mu_3 N_9$, $\mu_1 N_2+\mu_3 N_6+\mu_4 N_{10}$. For the first choice of the 1-chain, the symmetries are FLD when $\beta_1=-1/2,$ $\beta_2=\beta_3=0$, but this does not lead to an admissible potential. The second 1-chain possibility is FLD when $\beta_1=-1/4$, $\beta_3=\mu_1=\mu_3=0$, but this does not lead to an admissible potential. For the third 1-chain possibility, the symmetries are FLD when either $\beta_1=\beta_3=\mu_3=\mu_4=0$, $\mu_1=4\beta_2\mu_2$ or $\beta_1=-1/4,$ $\beta_3=\mu_2=\mu_3=\mu_4=0$, but neither corresponds to an admissible potential.

In the second case, one $2$-chain is $\{M_1+\beta_1 N_4+\beta_2 N_5+\beta_3 N_9,N_1\}$ and the second $2$-chain is either $\{\gamma_1 M_1+\gamma_2 M_2+\gamma_3 M_3+\gamma_4 M_6+\gamma_5 M_7+\delta_1 N_4+\delta_2 N_5+\delta_3 N_9, \gamma_1 N_1+\gamma_2 N_2+\gamma_3 N_3+\gamma_4 N_6+\gamma_5 N_7\}$ (we can take $\gamma_1=0$ by a canonical form-preserving change of basis) or $\{\gamma_1 M_4+\gamma_2 M_5+\delta N_8,\gamma_1 N_4+\gamma_2 N_5\}$. To simplify the analysis, we observe that the symmetry $M_1+\beta_1 N_4+\beta_2 N_5+\beta_3 N_9$ leads to an inadmissible potential unless $\beta_3=0$; similarly, if $\gamma_1 N_1+\gamma_2 N_2+\gamma_3 N_3+\gamma_4 N_6+\gamma_5 N_7$ is a symmetry of an admissible potential we must have $\gamma_4=0$. For the first choice of the second 2-chain, we find three sets of FLD symmetries: $\beta_1=\beta_3=\gamma_1=\gamma_4=0$, $\gamma_2=4\beta_2\gamma_3$; $\beta_1=-1/4$, $\beta_3=\gamma_1=\gamma_3=\gamma_4=0$; and $\beta_3=\gamma_1=\gamma_3=\gamma_4=\gamma_5=0$, $\gamma_2=2\delta_3$, but none of these lead to admissible potentials. The second choice of the second 2-chain leads to an FLD basis when $\beta_1=-1/4$, $\beta_3=\gamma_1=0$, but this does not lead to an admissible potential.

In the third case, one $2$-chain is $\{M_7+\beta_1 N_4+\beta_2 N_5+\beta_3 N_9,N_7\}$ and the second $2$-chain is either $\{\gamma_1 M_1+\gamma_2 M_2+\gamma_3 M_3+\gamma_4 M_6+\gamma_5 M_7+\delta_1 N_4+\delta_2 N_5+\delta_3 N_9, \gamma_1 N_1+\gamma_2 N_2+\gamma_3 N_3+\gamma_4 N_6+\gamma_5 N_7\}$ (we can take $\gamma_5=0$ by a canonical form-preserving change of basis) or $\{\gamma_1 M_4+\gamma_2 M_5+\delta N_8,\gamma_1 N_4+\gamma_2 N_5\}$. Using the requirement $\gamma_4=0$ from the second case, we find that the first choice for the second 2-chain does not yield an FLD basis for any choice of parameters. The second choice for the second 2-chain also does not lead to an FLD basis for any choice of parameters.

\subsubsection{Form (\ref{formabc}d)}

Here we have a 2-chain and three 1-chains. There are again three cases to consider: $\tilde{\mathcal{J}}^2$ and $\mathcal{H}^0$ are 1-chains, $\tilde{\mathcal{J}}^2$ is the terminal element of a 2-chain and $H$ is a 1-chain, and $\mathcal{H}^0$ is the terminal element of a $2$-chain and $\tilde{\mathcal{J}}^2$ is a 1-chain.

\looseness=-1 In the first case, the 2-chain is either $\{\alpha_1 M_1+\alpha_2 M_2+\alpha_3 M_3+\alpha_4 M_6+\alpha_5 M_7+\beta_1 N_4+\beta_2 N_5+\beta_3 N_9,\alpha_1 N_1+\alpha_2 N_2+\alpha_3 N_3+\alpha_4 N_6+\alpha_5 N_7\}$ or $\{\alpha_1 M_4+\alpha_2 M_5+\beta N_8,\alpha_1 N_4+\alpha_2 N_5\}$ and the final 1-chain is one of $\mu_1 N_2 +\mu_2 N_3+\mu_3 N_6+\mu_4 N_{10}$, $\mu_1 N_4+\mu_2 N_5+\mu_3 N_9$, $N_8$. To simplify the analysis, it is sometimes useful to find conditions under which the nontrivial 1-chains are compatible (both correspond to the same admissible potential) before searching for FLD systems. For the first choice of the 2-chain where the final 1-chain is order-0, we have the conditions $\alpha_2=-2\alpha_3 \mu_3/\mu_4$ and $\mu_2=\big(\mu_3^3-2\mu_1\mu_4^2+2\mu_3\mu_4^2\big)/4\mu_3\mu_4$ when $\mu_4\neq 0$ (we must also assume $\alpha_3\mu_3\neq 0$ to avoid linear dependence), but this does not lead to an FLD system with admissible potential. When $\mu_4=0$, the 1-chains are incompatible. For the first choice of the 2-chain where the final 1-chain is order-1, we have the compatibility conditions $\alpha_3=0$ or $\alpha_2=2\alpha_3\mu_2/\mu_1$ ($\mu_1\neq 0$); the first of these leads to an FLD system ($\alpha_3=\alpha_4=\alpha_5=\mu_1=\mu_3=0$, $\beta_3=3\alpha_2/2$) with admissible potential
\begin{equation}\label{case2_potential4}
V(\xi,z)=\frac{b_1}{\xi^2}+b_2(q\xi+z)
\end{equation}
and an FLD system ($\alpha_3=\alpha_4=\alpha_5=\mu_1=\mu_3=0$, $\beta_3=-5\alpha_2/2$) with admissible potential
\begin{equation}\label{case2_potential5}
V(\xi,z)=\frac{b_1}{\xi^{2/3}}+\frac{b_2(q\xi+z)}{\xi^{1/3}}.
\end{equation}
In the first choice of the 2-chain where the final 1-chain is order-2, the symmetries are FLD when $\alpha_3=\alpha_4=\alpha_5=0$ and $\beta_3=\alpha_2/2$, but this does not lead to an admissible potential. For the second choice of the 2-chain where the final 1-chain is order-0, the symmetries are FLD when $\alpha_1=\mu_2=\mu_3=\mu_4=0$, but this does not lead to an admissible potential. For the second choice of the 2-chain where the final 1-chain is order-1, imposing $\mu_3=0$ we find that the symmetries are not FLD for any choice of parameters. For the second choice of the 2-chain where the final 1-chain is order-2, the symmetries are not FLD for any choice of parameters.

In the second case, the 2-chain is $\{M_1+\beta_1 N_4+\beta_2 N_5+\beta_3 N_9, N_1\}$ and there are five subcases for the two remaining 1-chains: one order-2 and one order-1 1-chain, one order-2 and one order-0 1-chain, two order-1 1-chains, one order-1 and one order-0 1-chain, and two order-0 1-chains.

In the first subcase, the symmetries are FLD when $\beta_1=-1/4$ and $\beta_3=\mu_1=\mu_3=0$, but this does not lead to an admissible potential.

In the second subcase, the symmetries are FLD when either $\beta_1=-1/2$, $\beta_2=\beta_3=0$ or $\beta_3=\mu_3=\mu_4=0$. From here we obtain three admissible potentials. When $\beta_1=-1/2$, $\beta_2=\beta_3=0$, $\mu_3,\mu_4\neq 0$ and $\mu_2=\big(\mu_3^3-2\mu_1\mu_4^2+2\mu_3\mu_4^2\big)/4\mu_3\mu_4$, we have the potential
\begin{equation}\label{case2_potential6}
V(\xi,z)=b_1\xi+\frac{b_2}{(q\xi+z)^2},
\end{equation}
when $\beta_2=(\mu_1+2\beta_1\mu_1)/4\mu_2$, $\beta_3=\mu_3=\mu_4=0$, we have the potential
\begin{equation}\label{case2_potential7}
V(\xi,z)=\frac{b}{(q\xi+z)^2}+F(\xi),
\end{equation}
and when $\beta_3=\mu_2=\mu_3=\mu_4=0$, we have the potential
\begin{equation}\label{case2_potential8}
V(\xi,z)=\frac{b z}{\xi^3}+F(\xi).
\end{equation}

In the third subcase, we recall that $\mu_1 N_4+\mu_2 N_5+\mu_3 N_9$ only leads to an admissible potential when $\mu_3=0$. Then, by a canonical form-preserving change of basis, we see that $N_4$ and $N_5$ must be independent symmetries. The symmetries are FLD when $\beta_3=0$ and lead to an admissible potential
\begin{equation}\label{case2_potential9}
V(\xi,z)=\frac{b z}{\xi^{3/2}}+F(\xi).
\end{equation}

In the fourth subcase, we write $\mu_1N_4+\mu_2 N_5+\mu_3 N_9$ and $\nu_1N_2+\nu_2 N_3+\nu_3 N_6+\nu_4 N_{10}$ for the order-1 and order-2 1-chains, respectively. The symmetries are FLD when $\beta_1=-1/4$, $\beta_3=\mu_1=\mu_3=0$ or $\beta_3=\nu_3=\nu_4=0$. There are two resulting FLD systems with admissible potentials: $\beta_2=(\mu_2+2\beta_1\mu_2)/2\mu_1$, $\nu_1=2\mu_2\nu_2/\mu_1$, $\beta_3=\mu_3=\nu_3=\nu_4=0$, we obtain a potential equivalent to \eqref{case2_potential7}
and $\beta_3=\mu_1=\mu_3=\nu_3=\nu_2=\nu_4=0$ with
\begin{equation}\label{case2_potential10}
V(\xi,z)=b z\xi^{a}+F(\xi).
\end{equation}

In the fifth subcase, we $\mu_1N_2+\mu_2 N_3+\mu_3 N_6+\mu_4 N_{10}$ and $\nu_1N_2+\nu_2 N_3+\nu_3 N_6+\nu_4 N_{10}$ for the two order-0 1-chains. Assume first that $\mu_4$ and $\nu_4$ are not both zero. Without loss of generality we assume $\mu_4\neq 0$, so we can take $\nu_4=0$ by a canonical form-preserving change of basis. It is then required that $\nu_3=0$ if we are to have an admissible potential. The 1-chains are incompatible unless $\mu_3=-\nu_1 \mu_4/2\nu_2$. When additionally $\nu_1=4\beta_2\nu_2$, $\beta_1=\beta_3=\nu_3=\nu_4=0$, we find an FLD system with admissible potential
\begin{equation}\label{case2_potential11}
V(\xi,z)=b\xi+F(q\xi+z).
\end{equation}
If $\mu_4=\nu_4=0$, we must also have $\mu_3=\nu_3=0$ and we can consider $N_2$ and $N_3$ as independent symmetries. The symmetries are FLD when $\beta_3=0$; when additionally $\beta_1=-1/10$, we find the admissible potential
\begin{equation}\label{case2_potential12}
V(\xi,z)=b \xi z+F(\xi),
\end{equation}
and when additionally $\beta_1=0$, we find the admissible potential
\begin{equation}\label{case2_potential13}
V(\xi,z)=b z+F(\xi).
\end{equation}

In the third case, the 2-chain is $\{M_7+\beta_1 N_4+\beta_2 N_5+\beta_3 N_9, N_1\}$ and there are five subcases for the two remaining 1-chains: one order-2 and one order-1 1-chain, one order-2 and one order-0 1-chain, two order-1 1-chains, one order-1 and one order-0 1-chain, and two order-0 1-chains. The first three subcases are not FLD for any choice of parameters. In the fourth subcase, we write $\mu_1N_4+\mu_2 N_5+\mu_3 N_9$ and $\nu_1N_2+\nu_2 N_3+\nu_3 N_6+\nu_4 N_{10}$ for the order-1 and order-2 1-chains, respectively. The symmetries are FLD when $\mu_1=\mu_3=\nu_2=\nu_3=\nu_4=0$, but this does not lead to an admissible potential. In the fifth subcase, we write $\mu_1 N_2+\mu_2 N_3+\mu_3 N_6+\mu_4 N_{10}$ and $\nu_1 N_2+\nu_2 N_3+\nu_3 N_6+\nu_4 N_{10}$ for the two 1-chains. Compatibility of these 1-chains requires $\mu_3=\mu_4=\nu_3=\nu_4=0$ and we may take $N_2$ and $N_3$ as independent symmetries. However, the simultaneous admissible potential of $N_2$ and $N_3$ is incompatible with the $M_7+\beta_1 N_4+\beta_2 N_5+\beta_3 N_9$ for all choices of parameters.

\subsubsection{Form (\ref{formabc}e)}

Here we have five 1-chains, two of which must be $\mathcal{H}^0$ and $\tilde{\mathcal{J}}^2$. There are seven cases for the three additional 1-chains:
\begin{enumerate}\itemsep=0pt
\item[1)] one order-2 1-chain and two order-1 1-chains,
\item[2)] one order-2, one order-1, and one order-0 1-chain,
\item[3)] one order-2 1-chain and two order-0 1-chains,
\item[4)] three order-1 1-chains,
\item[5)] two order-1 1-chains and one order-0 1-chain,
\item[6)] one order-1 and two order-0 1-chains,
\item[7)] three order-0 1-chains.
\end{enumerate}

In the first case, we write $N_8$, $\mu_1 N_4+\mu_2 N_5+\mu_3 N_9$ and $\nu_1 N_4+\nu_2 N_5+\nu_3 N_9$ for the three 1-chains. The potential is admissible only if $\mu_3=\nu_3=0$, so we may take~$N_4$ and~$N_5$ as independent symmetries. The symmetries are incompatible (do not have a simultaneous admissible potential).

In the second case, we write $N_8$, $\mu_1 N_4+\mu_2 N_5+\mu_3 N_9$ and $\nu_1 N_2+\nu_2 N_3+\nu_3 N_6+\nu_4 N_{10}$ for the three 1-chains. The symmetries are FLD when $\mu_3=\nu_3=\nu_4=0$. When also $\mu_2=\nu_1=0$, we find the potential
\begin{equation}\label{case2_potential14}
V(\xi,z)=\frac{b}{z^2}+F(\xi);
\end{equation}
when also $\nu_2=\mu_1\nu_1/2\mu_2$, we find the admissible potential{\samepage
\begin{equation}\label{case2_potential15}
V(\xi,z)=\frac{b_1\xi^2+b_2z(\mu_1 z+\mu_2\xi)}{\xi^2(2\mu_1 z+\mu_2\xi)^2}+F(\xi),
\end{equation}
which contains \eqref{case2_potential14} as a special case.}

In the third case, we write $N_8$, $\mu_1 N_2+\mu_2 N_3+\mu_3 N_6+\mu_4 N_{10}$ and $\nu_1 N_2+\nu_2 N_3+\nu_3 N_6+\nu_4 N_{10}$ for the 1-chains. We first assume that one of $\mu_4$, $\nu_4$ is nonzero. Without loss of generality we take $\mu_4\neq 0$ so that we may take $\nu_4=0$ by a canonical form-preserving change of basis. We can only have an admissible potential if also $\nu_3=0$. The symmetries are not FLD for any choice of the remaining parameters. We then consider the case where $\mu_3=\mu_4=\nu_3=\nu_4=0$. We can then take $N_2$ and $N_3$ as independent symmetries, but the symmetries are not FLD.

In the fourth case, we can make a canonical-form preserving change of basis and consider~$N_4$,~$N_5$ and $N_9$ as independent symmetries. These symmetries are incompatible (in particular, $N_9$~does not produce an admissible symmetry).

The fifth case is similar to the first case: we may take $N_4$ and $N_5$ as independent symmetries. We write $\mu_1 N_2+\mu_2 N_3+\mu_3 N_6+\mu_4 N_{10}$ for the remaining nontrivial 1-chain. The symmetries are FLD when $\mu_3=\mu_4=0$; when also $\mu_2=0$, we find the admissible potential~\eqref{case2_potential9}.

In the sixth case, we write $\mu_1 N_4+\mu_2 N_5+\mu_3 N_9$, $\nu_1 N_2+\nu_2 N_3+\nu_3 N_6+\nu_4 N_{10}$, and $\sigma_1 N_2+\sigma_2 N_3+\sigma_3 N_6+\sigma_4 N_{10}$ for the three 1-chains. This case is similar to the third case: the two subcases reduce to $\nu_4\neq 0$, $\sigma_3=\sigma_4=0$ and $\nu_2=\nu_3=\nu_4=\sigma_1=\sigma_3=\sigma_4=0$. The first subcase is FLD when also $\mu_1=\sigma_2=0$, but we do not get an admissible potential. The second subcase is FLD and when also $\mu_1=0$, we find the admissible potential \eqref{case2_potential14}.

In the seventh case, we write $\mu_1 N_2+\mu_2 N_3+\mu_3 N_6+\mu_4 N_{10}$, $\nu_1 N_2+\nu_2 N_3+\nu_3 N_6+\nu_4 N_{10}$, and $\sigma_1 N_2+\sigma_2 N_3+\sigma_3 N_6+\sigma_4 N_{10}$ for the three 1-chains. We assume that at least one of $\mu_4$, $\nu_4$, $\sigma_4$ is nonzero. Without loss of generality, we take $\mu_4\neq 0$ so we can make a canonical form-preserving change of basis and take $\nu_4=\sigma_4=0$. The second and third symmetries will only have an admissible potential if also $\nu_3=\sigma_3=0$, so we may also take $\nu_2=\sigma_1=0$: $N_2$ and $N_3$ are independent symmetries. The symmetries are incompatible unless $\mu_4=0$, a contradiction. We next assume $\mu_4=\nu_4=\sigma_4=0$. Then we may consider $N_2$, $N_3$, and $N_6$ as independent symmetries. These symmetries are incompatible.

\subsubsection{Structure algebras}

For the potential \eqref{case2_potential1}, we have the symmetries
\begin{gather*}
 \tilde{\mathcal{J}}=(p_1+{\rm i}p_2)/2 , \qquad \mathcal{S}_{1}=\mathcal{H}=p_1^2+p_2^2+p_3^2+\frac{b}{\xi^2}+F(q \xi+z),\qquad \mathcal{S}_{2}=\tilde{\mathcal{J}}^2, \\
 \mathcal{S}_{3}=L_1+q M_5+\frac{q^{2}}{2}N_8+\frac{b(2qz-\eta)}{2\xi},\qquad \mathcal{S}_{4}=M_1+qN_5+\frac{b}{2\xi}, \\
 \mathcal{S}_{5}=N_3+4q N_2+F(q z+\xi).
\end{gather*}
They satisfy $4(2qz-\eta)\tilde{\mathcal{J}}^2+\xi\mathcal{S}_{1}^0-4\mathcal{S}_{4}^0-\xi\mathcal{S}_{5}^0=0$
and their nonzero commutators are
\begin{gather*}
\big\{\tilde{\mathcal{J}},\mathcal{S}_{3}\big\}=\mathcal{S}_{4},\qquad \big\{\tilde{\mathcal{J}},\mathcal{S}_{4}\big\}=\tilde{\mathcal{J}}^2,\qquad \{\mathcal{S}_{3},\mathcal{S}_{4}\}=-2\tilde{\mathcal{J}}\mathcal{S}_{3}-q^{2}b\tilde{\mathcal{J}}, \\
\{\mathcal{S}_{3},\mathcal{S}_{5}\}=8q^{2}\tilde{\mathcal{J}}\mathcal{S}_{4},\qquad \{\mathcal{S}_{4},\mathcal{S}_{5}\}=8q^{2}\tilde{\mathcal{J}}^3.
\end{gather*}

For the potential~\eqref{case2_potential2}, the symmetries and their FLD relation and algebra are obtained from that of~\eqref{case2_potential1} in the limit $q\to 0$.

For the potential \eqref{case2_potential3}, we have the symmetries
\begin{gather*}
 \tilde{\mathcal{J}}=(p_1+{\rm i}p_2)/2, \qquad \mathcal{S}_{1}=\mathcal{H}=p_1^2+p_2^2+p_3^2+\frac{F(z/\xi)}{\xi^2} ,\qquad \mathcal{S}_{2}=\tilde{\mathcal{J}}^2, \\
 \mathcal{S}_{3}=L_1-\frac12 M_4-\frac{\big(\xi\eta+z^2\big)F(z/\xi)}{2\xi^2},\qquad \mathcal{S}_{4}=M_1-\frac12 N_4+\frac{F(z/\xi)}{2\xi}, \\
 \mathcal{S}_{5}=N_8+F(z/\xi).
\end{gather*}
They satisfy
$4\big(\xi\eta+z^2\big)\tilde{\mathcal{J}}^2-\xi^2 \mathcal{S}_{1}^0+4\xi\mathcal{S}_{4}^0-\mathcal{S}_{5}^0=0$,
and their nonzero commutators are
\begin{gather*}
\big\{\tilde{\mathcal{J}},\mathcal{S}_{3}\big\}=\mathcal{S}_{4}, \qquad \big\{\tilde{\mathcal{J}},\mathcal{S}_{4}\big\}=\tilde{\mathcal{J}}^2 ,\qquad \{\mathcal{S}_{3},\mathcal{S}_{4}\}=-2\tilde{\mathcal{J}} \mathcal{S}_{3}.
\end{gather*}

For the potential \eqref{case2_potential4}, we have the symmetries
\begin{gather*}
\tilde{\mathcal{J}}=(p_1+{\rm i}p_2)/2 , \qquad \mathcal{S}_{1}=\mathcal{H}=p_1^2+p_2^2+p_3^2+\frac{b_1}{\xi^2}+b_2(q\xi +z),\qquad \mathcal{S}_{2}=\tilde{\mathcal{J}}^2, \\
\mathcal{S}_{3}=L_1+q M_5+\frac{q^2}{2}N_8+\frac{b_1(2qz-\eta)}{2\xi} ,\qquad \mathcal{S}_{4}=M_1+\frac{b_1}{2\xi}+\frac{q b_2 \xi^2}{8}, \\
\mathcal{S}_{5}=M_2-q N_4+\frac32 N_9+\frac{b_1 z}{\xi^2}+\frac{b_2(2qz-\eta)\xi}4,\qquad \mathcal{S}_{6}=N_2+\frac{b_2\xi}4, \\
 \mathcal{S}_{7}=N_3+b_2 z,\qquad \mathcal{S}_{8}=N_5-\frac{b_2 \xi^2}8.
\end{gather*}
They satisfy $4\eta\tilde{\mathcal{J}}^2-\xi \mathcal{S}_{1}^0+4 \mathcal{S}_{4}^0+\xi \mathcal{S}_{7}^0=2z \tilde{\mathcal{J}}^2-\xi \mathcal{S}_{6}^0- \mathcal{S}_{8}^0=0$. The subset $\big\{\tilde{\mathcal{J}},\mathcal{S}_{1},\mathcal{S}_{2},\mathcal{S}_{4},\mathcal{S}_{6},\mathcal{S}_{7},\mathcal{S}_{8}\big\}$ generates a closed quadratic algebra with nonzero relations
\begin{gather*}
\big\{\tilde{\mathcal{J}},\mathcal{S}_{4}\big\}=\tilde{\mathcal{J}}^2,\qquad \{\mathcal{S}_{4},\mathcal{S}_{6}\}=-\tilde{\mathcal{J}}\mathcal{S}_{6},\qquad \{\mathcal{S}_{4},\mathcal{S}_{8}\}=-2\tilde{\mathcal{J}}\mathcal{S}_{8}, \\
 \{\mathcal{S}_{6},\mathcal{S}_{7}\}=-b_2\tilde{\mathcal{J}},\qquad \{\mathcal{S}_{6},\mathcal{S}_{8}\}=-2\tilde{\mathcal{J}}^3, \qquad \{\mathcal{S}_{7},\mathcal{S}_{8}\}=-4\tilde{\mathcal{J}}\mathcal{S}_{6}.
\end{gather*}
However, if any linear combination of $\mathcal{S}_{3}$, $\mathcal{S}_{5}$ is added to the generators, a new 3rd order symmetry is produced that is not a polynomial in the generators, so the resulting algebra doesn't close at second order.

For the potential \eqref{case2_potential5}, we have the symmetries
\begin{gather*}
 \tilde{\mathcal{J}}=(p_1+{\rm i}p_2)/2 , \qquad \mathcal{S}_{1}=\mathcal{H}=p_1^2+p_2^2+p_3^2+\frac{b_1}{\xi^{2/3}}+\frac{b_2(q\xi+z)}{\xi^{4/3}},\qquad \mathcal{S}_{2}=\tilde{\mathcal{J}}^2, \\
 \mathcal{S}_{3}=M_1+2N_4-\frac{b_1\xi^{1/3}}{2}-\frac{b_2(q\xi+16z)}{8\xi^{4/3}} ,\\
 \mathcal{S}_{4}=M_2-qN_4-\frac52 N_9 -\frac{b_1 z}{\xi^{2/3}}+\frac{b_2\big(2q\xi z+3\xi\eta-4z^2\big)}{4\xi^{4/3}}, \\
 \mathcal{S}_{5}=N_2-\frac{3b_2}{4\xi^{1/3}},\qquad \mathcal{S}_{6}=N_5-\frac{3b_2\xi^{2/3}}8.
\end{gather*}
They satisfy $2z\tilde{\mathcal{J}}^2-\xi \mathcal{S}_{5}^0-\mathcal{S}_{6}^0=0$. The subset $\big\{\tilde{\mathcal{J}},\mathcal{S}_{1},\mathcal{S}_{2},\mathcal{S}_{3},\mathcal{S}_{5},\mathcal{S}_{6}\big\}$ generates a closed quadratic algebra with nonzero relations
\begin{gather*}
 \big\{ \tilde{\mathcal{J}},\mathcal{S}_{3} \big\}=\tilde{\mathcal{J}}^2,\qquad \{\mathcal{S}_{3},\mathcal{S}_{5}\}=3\tilde{\mathcal{J}}\mathcal{S}_{5}, \qquad
 \{ \mathcal{S}_{3},\mathcal{S}_{6}\}=-6\tilde{\mathcal{J}}\mathcal{S}_{6},\qquad \{ \mathcal{S}_{5},\mathcal{S}_{6}\}=-2\tilde{\mathcal{J}}^3.
\end{gather*}
However, if $\mathcal{S}_{4}$ is added to the generators, a new 3rd order symmetry is produced that is not a~polynomial in the generators, so the resulting algebra doesn't close at second order.

For the potential \eqref{case2_potential6}, we have the symmetries
\begin{gather*}
 \tilde{\mathcal{J}}=(p_1+{\rm i}p_2)/2, \qquad \mathcal{S}_{1}=\mathcal{H}=p_1^2+p_2^2+p_3^2+b_1\xi+\frac{b_2}{(q\xi+z)^2},\\
 \mathcal{S}_{2}=\tilde{\mathcal{J}}^2,\qquad \mathcal{S}_{3}=M_1+qN_5+\frac{b_1 \xi^2}{8}, \\
 \mathcal{S}_{4}=M_3+4q M_2+2q(1-2q^2)N_5+\frac{b_1 z(z+2q\xi)}{2}+\frac{b_2\big(2qz+2q^2\xi-(\xi+\eta)\big)}{(q\xi+z)^2}, \\
 \mathcal{S}_{5}=N_3+4q N_2+\frac{b_2}{(q\xi+z)^2},\qquad \mathcal{S}_{6}=N_4+2q N_5-\frac{b_2\xi}{(q\xi+z)^2}, \qquad \mathcal{S}_{7}=N_8+\frac{b_2\xi^2}{(q\xi+z)^2}, \\
 \mathcal{S}_{8}=N_{10}-2q N_6-2q(1+2q^2)N_2-\frac{b_1(2qz-\eta)}{2}-\frac{b_2 q^2}{(q\xi+z)^2}.
\end{gather*}
They satisfy
\begin{gather*} 4(2qz-\eta)\tilde{\mathcal{J}}^2+\xi \mathcal{S}_{1}^0-4\mathcal{S}_{3}^0-\xi \mathcal{S}_{5}^0=4\big(\xi \eta+z^2\big)\tilde{\mathcal{J}}^2-\xi^2 \mathcal{S}_{1}^0+4\xi \mathcal{S}_{3}^0-2\xi \mathcal{S}_{6}^0-\mathcal{S}_{7}^0=0.
\end{gather*}
The subset $\big\{\tilde{\mathcal{J}},\mathcal{S}_{1},\mathcal{S}_{2},\mathcal{S}_{3},\mathcal{S}_{5},\mathcal{S}_{6},\mathcal{S}_{7}\big\}$ generates a closed quadratic algebra with nonzero relations
\begin{gather*}
 \big\{\tilde{\mathcal{J}},\mathcal{S}_{3}\big\}=\tilde{\mathcal{J}}^2,\qquad \{\mathcal{S}_{3},\mathcal{S}_{5}\}=8q^2\tilde{\mathcal{J}}^3,\qquad \{\mathcal{S}_{3},\mathcal{S}_{6}\}=-\tilde{\mathcal{J}}\mathcal{S}_{6},\qquad \{\mathcal{S}_{3},\mathcal{S}_{7}\}=-2\tilde{\mathcal{J}}\mathcal{S}_{7},\\
\{\mathcal{S}_{5},\mathcal{S}_{6}\}=-4\tilde{\mathcal{J}}\mathcal{S}_{5}-16q^2\tilde{\mathcal{J}}^3, \qquad
 \{\mathcal{S}_{5},\mathcal{S}_{7}\}=-8\tilde{\mathcal{J}}\mathcal{S}_{6},\qquad \{\mathcal{S}_{6},\mathcal{S}_{7}\}=-4\tilde{\mathcal{J}}\mathcal{S}_{7}.
\end{gather*}
However, if any linear combination of $\mathcal{S}_{4}$, $\mathcal{S}_{8}$ is added to the generators, a new 3rd order symmetry is produced that is not a polynomial in the generators, so the resulting algebra doesn't close at second order.

For the potential \eqref{case2_potential7}, we have the symmetries
\begin{gather*}
 \tilde{\mathcal{J}}=(p_1+{\rm i}p_2)/2, \qquad \mathcal{S}_{1}=\mathcal{H}=p_1^2+p_2^2+p_3^2+\frac{b}{(q\xi+z)^2}+F(\xi) ,\qquad \mathcal{S}_{2}=\tilde{\mathcal{J}}^2, \\
 \mathcal{S}_{3}=M_1+qN_5+\frac14\int \xi F'(\xi)\,\mathrm{d}\xi,\qquad \mathcal{S}_{4}=N_3+4q N_2+\frac{b}{(q\xi+z)^2}, \\
 \mathcal{S}_{5}=N_4+2qN_5-\frac{b\xi}{(q\xi+z)^2},\qquad \mathcal{S}_{6}=N_8+\frac{b\xi^2}{(q\xi+z)^2}.
\end{gather*}
They satisfy
\begin{equation*}
4(2qz-\eta)\tilde{\mathcal{J}}^2+\xi\mathcal{S}_{1}^0-4\mathcal{S}_3^0-\xi \mathcal{S}_{4}^0=4\big(z^2+\xi\eta\big)\tilde{\mathcal{J}}^2-\xi^2\mathcal{S}_{1}^0+4\xi\mathcal{S}_{3}^0-2\xi \mathcal{S}_{5}^0-\mathcal{S}_{6}^0=0
\end{equation*}
and their nonzero commutators are
\begin{gather*}
\big\{\tilde{\mathcal{J}},\mathcal{S}_{3}\big\}=\tilde{\mathcal{J}}^2,\qquad \{\mathcal{S}_{3},\mathcal{S}_{4}\}=8q^2\tilde{\mathcal{J}}^3,\qquad \{\mathcal{S}_{3},\mathcal{S}_{5}\}=-\tilde{\mathcal{J}}\mathcal{S}_{5},\qquad \{\mathcal{S}_{3},\mathcal{S}_{6}\}=-2\tilde{\mathcal{J}}\mathcal{S}_{6}, \\
 \{\mathcal{S}_{4},\mathcal{S}_{5}\}=-4\tilde{\mathcal{J}}\mathcal{S}_{4}-16q^2\tilde{\mathcal{J}}^3,\qquad \{\mathcal{S}_{4},\mathcal{S}_{6}\}=-8\tilde{\mathcal{J}}\mathcal{S}_{5},\qquad \{\mathcal{S}_{5},\mathcal{S}_{6}\}=-4\tilde{\mathcal{J}}\mathcal{S}_{6}.
\end{gather*}

For the potential \eqref{case2_potential8}, we have the symmetries
\begin{gather*}
\tilde{\mathcal{J}}=(p_1+{\rm i}p_2)/2, \qquad \mathcal{S}_{1}=\mathcal{H}=p_1^2+p_2^2+p_3^2+\frac{bz}{\xi^3}+F(\xi), \qquad \mathcal{S}_{2}=\tilde{\mathcal{J}}^2, \\
\mathcal{S}_{3}=M_1-\frac12 N_4+\frac{bz}{2\xi^2}+\frac14 \int \xi F'(\xi)\,\mathrm{d}\xi,\qquad \mathcal{S}_{4}=N_2-\frac{b}{8\xi^2}, \\
\mathcal{S}_{5}=N_5+\frac{b}{4\xi},\qquad \mathcal{S}_{6}=N_8+\frac{bz}{\xi}.
\end{gather*}
They satisfy $4\big(\xi \eta+z^2\big)\tilde{\mathcal{J}}^2-\xi^2 \mathcal{S}_{1}^0+4\xi \mathcal{S}_{3}^0-\mathcal{S}_{6}^0=2z \tilde{\mathcal{J}}^2-\xi \mathcal{S}_{4}^0-\mathcal{S}_{5}^0=0$ and their nonzero commutators are
\begin{gather*}
 \big\{\tilde{\mathcal{J}},\mathcal{S}_{3}\big\}=\tilde{\mathcal{J}}^2,\qquad \{\mathcal{S}_{3},\mathcal{S}_{4}\}=-2\tilde{\mathcal{J}}\mathcal{S}_{4},\qquad \{\mathcal{S}_{3},\mathcal{S}_{5}\}=-\tilde{\mathcal{J}}\mathcal{S}_{5}, \\
 \{\mathcal{S}_{4},\mathcal{S}_{6}\}=-4\tilde{\mathcal{J}}\mathcal{S}_{5},\qquad \{\mathcal{S}_{5},\mathcal{S}_{6}\}=b\tilde{\mathcal{J}}.
\end{gather*}

The case of the potential \eqref{case2_potential9} is treated as a special case of~\eqref{case2_potential10} (with $a=-3/2$) below.

We consider the potential \eqref{case2_potential10}:
\begin{equation*}
V(\xi,z)=b z\xi^a+F(\xi), \qquad a\neq -2,-3/2, -1;
\end{equation*}
we cover these exclusions as special cases below. Under our assumptions we have the symmetries
\begin{gather*}
 \tilde{\mathcal{J}}=(p_1+{\rm i}p_2)/2, \qquad \mathcal{S}_{1}=\mathcal{H}=p_1^2+p_2^2+p_3^2+b z\xi^a+F(\xi),\qquad \mathcal{S}_{2}=\tilde{\mathcal{J}}^2, \\
 \mathcal{S}_{3}=M_1-\frac{a}{2(2a+3)}N_4+\frac{2ab z\xi^{a+1}}{4(2a+3)}+\frac14\int \xi F'(\xi)\,\mathrm{d}\xi, \\
 \mathcal{S}_{4}=N_2+\frac{b\xi^{a+1}}{4(1+a)},\qquad \mathcal{S}_{5}=N_5-\frac{b\xi^{a+2}}{4(a+2)}.
\end{gather*}
They satisfy
\begin{equation}\label{case2_potential10_FLD}
2z\tilde{\mathcal{J}}^2-\xi\mathcal{S}_{4}^0-\mathcal{S}_{5}^0=0
\end{equation}
and their nonzero commutators are
\begin{gather*}
\big\{\tilde{\mathcal{J}},\mathcal{S}_{3}\big\}=\tilde{\mathcal{J}}^2,\qquad \{\mathcal{S}_{3},\mathcal{S}_{4}\}=-\frac{3(a+1)}{2a+3}\tilde{\mathcal{J}}\mathcal{S}_{4},\\
 \{\mathcal{S}_{3},\mathcal{S}_{5}\}=-\frac{3(a+2)}{2a+3}\tilde{\mathcal{J}}\mathcal{S}_{5},\qquad \{\mathcal{S}_{4},\mathcal{S}_{5}\}=-2\tilde{\mathcal{J}}^3.
\end{gather*}
In the case $a=-2$ we have the symmetries
\begin{gather*}
 \tilde{\mathcal{J}}=(p_1+{\rm i}p_2)/2, \qquad \mathcal{S}_{1}=\mathcal{H}=p_1^2+p_2^2+p_3^2+\frac{b z}{\xi^2}+F(\xi),\qquad \mathcal{S}_{2}=\tilde{\mathcal{J}}^2, \\
 \mathcal{S}_{3}=M_1-N_4+\frac{b z}{\xi}+\frac14\int \xi F'(\xi)\,\mathrm{d}\xi, \quad \mathcal{S}_{4}=N_2-\frac{b}{4\xi},\qquad \mathcal{S}_{5}=N_5-\frac{b \log \xi}{4}.
\end{gather*}
They satisfy \eqref{case2_potential10_FLD} and their nonzero commutators are
\begin{gather*}
\big\{\tilde{\mathcal{J}},\mathcal{S}_{3}\big\}=\tilde{\mathcal{J}}^2,\qquad \{\mathcal{S}_{3},\mathcal{S}_{4}\}=-3\tilde{\mathcal{J}}\mathcal{S}_{4},\qquad \{\mathcal{S}_{3},\mathcal{S}_{5}\}=-\frac{3b}{4}\tilde{\mathcal{J}},\qquad \{\mathcal{S}_{4},\mathcal{S}_{5}\}=-2\tilde{\mathcal{J}}^3.
\end{gather*}
In the case $a=-3/2$ we have the symmetries
\begin{gather*}
 \tilde{\mathcal{J}}=(p_1+{\rm i}p_2)/2, \qquad \mathcal{S}_{1}=\mathcal{H}=p_1^2+p_2^2+p_3^2+\frac{b z}{\xi^{3/2}}+F(\xi),\qquad \mathcal{S}_{2}=\tilde{\mathcal{J}}^2, \\
 \mathcal{S}_{3}=N_4-\frac{b z}{\xi^{1/2}},\qquad \mathcal{S}_{4}=N_2-\frac{b}{2\xi^{1/2}},\qquad \mathcal{S}_{5}=N_5-\frac{b\xi^{1/2}}{2}.
\end{gather*}
They satisfy \eqref{case2_potential10_FLD} and their nonzero commutators are
\begin{gather*}
\{\mathcal{S}_{3},\mathcal{S}_{4}\}=2\tilde{\mathcal{J}}\mathcal{S}_{4},\qquad \{\mathcal{S}_{3},\mathcal{S}_{5}\}=-2\tilde{\mathcal{J}}\mathcal{S}_{5},\qquad \{\mathcal{S}_{4},\mathcal{S}_{5}\}=-2\tilde{\mathcal{J}}^3.
\end{gather*}
In the case $a=-1$ we have the symmetries
\begin{gather*}
 \tilde{\mathcal{J}}=(p_1+{\rm i}p_2)/2, \qquad \mathcal{S}_{1}=\mathcal{H}=p_1^2+p_2^2+p_3^2+\frac{b z}{\xi}+F(\xi),\qquad \mathcal{S}_{2}=\tilde{\mathcal{J}}^2, \\
 \mathcal{S}_{3}=M_1-\frac12 N_4-\frac{b z}{2}+\int \xi F'(\xi)\,\mathrm{d}\xi,\quad \mathcal{S}_{4}=N_2+\frac{b\log\xi}{4},\qquad \mathcal{S}_{5}=N_5-\frac{b\xi}{4}.
\end{gather*}
They satisfy \eqref{case2_potential10_FLD} and their nonzero commutators are
\begin{gather*}\{\tilde{\mathcal{J}},\mathcal{S}_{3}\}=\tilde{\mathcal{J}}^2,\qquad
\{\mathcal{S}_{3},\mathcal{S}_{4}\}=-\tfrac{3b}{4}\tilde{\mathcal{J}},\qquad \{\mathcal{S}_{3},\mathcal{S}_{5}\}=-3\tilde{\mathcal{J}}\mathcal{S}_{5},\qquad \{\mathcal{S}_{4},\mathcal{S}_{5}\}=-2\tilde{\mathcal{J}}^3.
\end{gather*}

For the potential \eqref{case2_potential11}, we have the symmetries
\begin{gather*}
 \tilde{\mathcal{J}}=(p_1+{\rm i}p_2)/2, \qquad \mathcal{S}_{1}=\mathcal{H}=p_1^2+p_2^2+p_3^2+b \xi+F(q\xi +z),\qquad \mathcal{S}_{2}=\tilde{\mathcal{J}}^2, \\
 \mathcal{S}_{3}=M_1+q N_5+\frac{b\xi^2}8,\qquad \mathcal{S}_{4}=4q N_2+N_3+F(q\xi+z), \\
 \mathcal{S}_{5}=2q(1+2q^2)N_2+2q N_6-N_{10}-\frac{b\eta}{2}-qb z+q^2 F(q\xi+z).
\end{gather*}
They satisfy
\begin{equation*}
4(2qz-\eta)\tilde{\mathcal{J}}^2+\xi \mathcal{S}_{1}^0-4\mathcal{S}_{3}^0-\xi \mathcal{S}_{4}^0=0
\end{equation*}
and their nonzero commutators are
\begin{gather*}
 \big\{\tilde{\mathcal{J}},\mathcal{S}_{3}\big\}=\tilde{\mathcal{J}}^2,\qquad \big\{\tilde{\mathcal{J}},\mathcal{S}_{5}\big\}=\tfrac{b}{2},\qquad \{\mathcal{S}_{3},\mathcal{S}_{4}\}=8q^2\tilde{\mathcal{J}}^3,\\
 \{\mathcal{S}_{3},\mathcal{S}_{5}\}=8q^4\tilde{\mathcal{J}}^3+q^2\tilde{\mathcal{J}}\mathcal{S}_{2} -3q^2\tilde{\mathcal{J}}\mathcal{S}_{4}+2\tilde{\mathcal{J}}\mathcal{S}_{5},\qquad \{\mathcal{S}_{4},\mathcal{S}_{5}\}=-4q^2 b\tilde{\mathcal{J}}.
\end{gather*}

The case of the potential \eqref{case2_potential12} is obtained exactly as a special case of \eqref{case2_potential10} (with $a=1$) above.

For the potential \eqref{case2_potential13} (a special case of~\eqref{case2_potential10} with $a=0$, but with an additional symmetry), we have the symmetries
\begin{gather*}
 \tilde{\mathcal{J}}=(p_1+{\rm i}p_2)/2, \qquad \mathcal{S}_{1}=\mathcal{H}=p_1^2+p_2^2+p_3^2+b z+F(\xi),\qquad \mathcal{S}_{2}=\tilde{\mathcal{J}}^2, \\
 \mathcal{S}_{3}=M_1+\frac14 \int \xi F'(\xi)\,\mathrm{d}\xi,\qquad \mathcal{S}_{4}=N_2+\frac{b\xi}{4}, \qquad
 \mathcal{S}_{5}=N_3-bz,\qquad \mathcal{S}_{6}=N_5-\frac{b\xi^2}8.
\end{gather*}
They satisfy $4\eta \tilde{\mathcal{J}}^2-\xi\mathcal{S}_{1}^0+4\mathcal{S}_{3}^0+\xi \mathcal{S}_{5}^0=2z\tilde{\mathcal{J}}^2-\xi \mathcal{S}_{4}^0-\mathcal{S}_{6}^0=0$ and their nonzero commutators are
\begin{gather*}
\big\{\tilde{\mathcal{J}},\mathcal{S}_{3}\big\}=\tilde{\mathcal{J}}^2,\qquad \{\mathcal{S}_{3},\mathcal{S}_{4}\}=-\tilde{\mathcal{J}}\mathcal{S}_{4},\qquad \{\mathcal{S}_{3},\mathcal{S}_{6}\}=-2\tilde{\mathcal{J}}\mathcal{S}_{6},\\
 \{\mathcal{S}_{4},\mathcal{S}_{5}\}=-b\tilde{\mathcal{J}},\qquad \{\mathcal{S}_{4},\mathcal{S}_{6}\}=-4\tilde{\mathcal{J}}^2,\qquad \{\mathcal{S}_{5},\mathcal{S}_{6}\}=-4\tilde{\mathcal{J}}\mathcal{S}_{4}.
\end{gather*}

For the potential \eqref{case2_potential14}, we have the symmetries
\begin{gather*}
 \tilde{\mathcal{J}}=(p_1+{\rm i}p_2)/2, \qquad \mathcal{S}_{1}=\mathcal{H}=p_1^2+p_2^2+p_3^2+\frac{b}{z^2}+F(\xi),\qquad \mathcal{S}_{2}=\tilde{\mathcal{J}}^2, \\
 \mathcal{S}_{3}=M_1+\frac14 \int \xi F'(\xi)\,\mathrm{d}\xi,\qquad \mathcal{S}_{4}=N_2+\frac{b}{z^2}, \qquad
 \mathcal{S}_{5}=N_4-\frac{b\xi}{z^2},\qquad \mathcal{S}_{6}=N_8+\frac{b\xi^2}{z^2}.
\end{gather*}
They satisfy
\[ 4\big(\xi\eta+z^2\big)\tilde{\mathcal{J}}^2-\xi^2\mathcal{S}_{1}^0+4\xi \mathcal{S}_{3}^0-2\xi \mathcal{S}_{5}^0-\mathcal{S}_{6}^0=4\tilde{\mathcal{J}}^2-\xi \mathcal{S}_{1}^0+4\mathcal{S}_{3}^0+\xi\mathcal{S}_{4}^0=0
\]
and their nonzero commutators are
\begin{gather*}
 \big\{\tilde{\mathcal{J}},\mathcal{S}_{3}\big\}=\tilde{\mathcal{J}}^2,\qquad \{\mathcal{S}_{3},\mathcal{S}_{5}\}=-\tilde{\mathcal{J}}\mathcal{S}_{5},\qquad \{\mathcal{S}_{3},\mathcal{S}_{6}\}=-2\tilde{\mathcal{J}}\mathcal{S}_{6},\\
 \{\mathcal{S}_{4},\mathcal{S}_{5}\}=-4\tilde{\mathcal{J}},\qquad \{\mathcal{S}_{4},\mathcal{S}_{6}\}=-8\tilde{\mathcal{J}}\mathcal{S}_{5},\qquad \{\mathcal{S}_{5},\mathcal{S}_{6}\}=-4\tilde{\mathcal{J}}\mathcal{S}_{6}.
\end{gather*}

For the potential \eqref{case2_potential15}, we consider two cases. In the first case, $\mu_2=0$ and~\eqref{case2_potential15} reduces to~\eqref{case2_potential14} after a redefinition of $F(\xi)$. In the second case, we take $\mu_2\neq 0$, so we define $q=\mu_1/\mu_2$ so that~\eqref{case2_potential15} reduces to
\begin{equation*}
V(\xi,z)=\frac{bz(\xi+qz)}{\xi^2(\xi+2qz)^2}+F(\xi)
\end{equation*}
after a redefinition of $F(\xi)$ and introduction of a new free parameter~$b$. For this potential we have the symmetries
\begin{gather*}
 \tilde{\mathcal{J}}=(p_1+{\rm i}p_2)/2, \qquad \mathcal{S}_{1}=\mathcal{H}=p_1^2+p_2^2+p_3^2+\frac{bz(\xi+qz)}{\xi^2(\xi+2qz)^2}+F(\xi),\qquad \mathcal{S}_{2}=\tilde{\mathcal{J}}^2, \\
 \mathcal{S}_{3}=M_1-\frac12 N_4+\frac{b z(\xi+qz)}{2\xi(\xi+2qz)^2}+\frac14 \int \xi F(\xi)\,\mathrm{d}\xi,\qquad \mathcal{S}_{4}=N_2+\frac{q}{2}N_3-\frac{b}{8(\xi+2qz)^2}, \\
 \mathcal{S}_{5}=N_5+qN_4+\frac{b\xi}{4(\xi+2qz)^2},\qquad \mathcal{S}_{6}=N_8-\frac{b\xi^2}{4q(\xi+2qz)^2}.
\end{gather*}
They satisfy
\[ 4(q\eta-z)\tilde{\mathcal{J}}^2-q\xi\mathcal{S}_{1}^0+4q\mathcal{S}_{3}^0+2\xi \mathcal{S}_{4}^0+2\mathcal{S}_{5}^0=4\big(\xi\eta+z^2\big)\tilde{\mathcal{J}}^2-\xi^2\mathcal{S}_{1}^0+4\xi \mathcal{S}_{3}^0-\mathcal{S}_{6}^0=0
\]
and their nonzero commutators are
\begin{gather*}
\big\{\tilde{\mathcal{J}},\mathcal{S}_{3}\big\}=\tilde{\mathcal{J}}^2,\qquad \{\mathcal{S}_{3},\mathcal{S}_{4}\}=-2\tilde{\mathcal{J}}\mathcal{S}_{4},\qquad \{\mathcal{S}_{3},\mathcal{S}_{5}\}=-\tilde{\mathcal{J}}\mathcal{S}_{5}, \\
 \{\mathcal{S}_{4},\mathcal{S}_{5}\}=-4q\tilde{\mathcal{J}}\mathcal{S}_{4}-2\tilde{\mathcal{J}}^3,\qquad \{\mathcal{S}_{4},\mathcal{S}_{6}\}=-4\tilde{\mathcal{J}}\mathcal{S}_{5},\qquad \{\mathcal{S}_{5},\mathcal{S}_{6}\}=-4q\tilde{\mathcal{J}}\mathcal{S}_{6}.
\end{gather*}

\subsection[Third case: ${\mathcal J}=xp_2-yp_1$]{Third case: $\boldsymbol{{\mathcal J}=xp_2-yp_1}$}\label{symmetry3}

Here the centralizer of $\mathcal J$ is the group generated by translation in $z$ and rotations about the $z$-axis. We can use this freedom to simplify the computation. Since ${\mathcal J}$ is a symmetry the potential must be of the form $V\big(x^2+y^2,z\big)$. A basis for symmetries is again given by \eqref{generalsymmetry}, but as in the previous section, we will construct a more convenient basis by consider the action of $\operatorname{Ad}_{J_{12}}$.

In addition we obtain a series of equations for the first derivatives $\partial_xF_0$, $\partial_yF_0$, $\partial_zF_0$, which lead to Bertrand--Darboux equations for $V\big(x^2+y^2,z\big)$. At the end we have to find~5 linearly independent solutions for~$\mathcal S$ and verify that they admit one functionally linearly dependent solution.

The adjoint action ${\mathcal S}\to \{J_{12},{\mathcal S}\}\equiv \operatorname{Ad}_{J_{12}}{\mathcal S}$
will map the 5-dimensional space of a solution set into itself. This action preserves the order of symmetry operators that are homogeneous in Cartesian coordinates. However, it is also convenient to introduce cylindrical coordinates $\{r,\theta,z\}$ where $x=r\cos(\theta)$, $y=r\sin(\theta)$, $z=z$, and
\begin{gather*} p_1=p_r\cos(\theta)-p_{\theta}\sin(\theta)/r, \qquad p_2=p_r\sin(\theta)+p_{\theta}\cos(\theta)/r, \qquad p_3=p_z,\\
 J_{12}=p_{\theta},\qquad J_{13}=(r p_z-z p_r)\cos\theta+\frac{z \sin\theta \,p_\theta}{r},\qquad J_{23}=(r p_z-zp_r)\sin\theta-\frac{z\cos\theta\, p_\theta}{r}.
\end{gather*}

On the components of $R$ in \eqref{Rdefinition}, $\mathcal{J}=J_{12}$ has the following nontrivial actions:
\begin{gather*}
\operatorname{Ad}_{J_{12}} p_1=p_2,\qquad \operatorname{Ad}_{J_{12}} p_2=-p_1,\qquad \operatorname{Ad}_{J_{12}} J_{13}=J_{23},\qquad \operatorname{Ad}_{J_{12}} J_{23}=-J_{13}.
\end{gather*}
We can use these to construct a basis consisting of eigenvectors of $\operatorname{Ad}_{J_{12}}$. We label the eigenvectors to take advantage of their transformation under rotation: eigenvectors with subscripts~$\pm 2$, $\pm 1$, and~$0$ indicate corresponding eigenvalues of~$\pm 2{\rm i}$,~$\pm {\rm i}$, and~0, respectively (the second subscript, when applicable, distinguishes between multiple eigenvectors of the same order with the same eigenvalue).
A complex eigenbasis for the 6-dimensional space of symmetries of order~2 is
\begin{gather*} L_{0,1}= J_{12}^2,\qquad L_{0,2}= J_{13}^2+J_{23}^2\label{31},\qquad L_{+1} = -\tfrac12 J_{12}(J_{13}-{\rm i} J_{23}),\nonumber \\
 L_{-1}= \tfrac12 J_{12}(J_{13}+{\rm i} J_{23}),\qquad
 L_{-2}= -\tfrac{{\rm i}}{8}(J_{13}+{\rm i} J_{23})^2,\qquad
L_{+2}= \tfrac{{\rm i}}{8}(J_{13}-{\rm i} J_{23})^2.
\end{gather*}
A complex eigenbasis for the 8-dimensional space of symmetries of order 1 is
\begin{gather*}
 M_{0,1}=- p_3 J_{12},\qquad
M_{0,2}=-p_1 J_{13}-p_2 J_{23},\qquad
M_{+1,2}=-\tfrac12 p_3(J_{13}+{\rm i} J_{23}), \nonumber\\
 M_{+1,1}= (p_1+{\rm i} p_2)J_{12},\qquad
M_{-1,2} =\tfrac12 p_3 (J_{13}-{\rm i} J_{23}),\qquad
M_{-1,1}=-\tfrac12J_{12}(p_1-{\rm i} p_2), \nonumber\\
 M_{+2}= -\tfrac14(p_1-{\rm i} p_2)(J_{13}-{\rm i} J_{23}),\qquad
M_{-2}= \tfrac14(p_1+{\rm i} p_2)(J_{13}+{\rm i} J_{23}).
\end{gather*}
A complex eigenbasis for the 6-dimensional space of symmetries of order 0 is
\begin{gather*}
 N_{0,1}=p_3^2,\qquad
N_{0,2}=p_1^2+p_2^2+p_3^2,\qquad
N_{-2}=-\tfrac14(p_1+{\rm i} p_2)^2, \nonumber\\
 N_{+2}= \tfrac14(p_1-{\rm i} p_2)^2,\qquad
N_{-1}= -\tfrac12(p_1+{\rm i} p_2)p_3 ,\qquad
N_{+1}=\tfrac12 (p_1-{\rm i} p_2)p_3.
\end{gather*}

Because $\mathcal{J}$ and $\mathcal{H}$ must be basis vectors, it follows that the possible actions of $\operatorname{Ad}_{p_{\theta}}$ on an eigenbasis are described by the canonical forms
\begin{gather} \left(\begin{matrix}
\lambda_1&0&0&0&0\\
0&\lambda_2&0&0&0\\ 0&0&\lambda_3&0&0\\ 0&0&0&0&0\\ 0&0&0&0&0 \end{matrix}\right),\qquad
 \left(\begin{matrix} \lambda_1&0&0&0&0\\
0&\lambda_2&0&0&0\\ 0&0&0&0&0\\ 0&0&0&0&0\\ 0&0&0&0&0 \end{matrix}\right),\nonumber\\
 \left(\begin{matrix} \lambda_1&0&0&0&0\\
0&0&0&0&0\\ 0&0&0&0&0\\ 0&0&0&0&0\\ 0&0&0&0&0 \end{matrix}\right),\qquad
 \left(\begin{matrix} 0&0&0&0&0\\
0&0&0&0&0\\ 0&0&0&0&0\\ 0&0&0&0&0\\ 0&0&0&0&0 \end{matrix}\right),\label{formfg}
\end{gather}
where $\lambda_j=\pm {\rm i}, \pm 2{\rm i}$. Note that there are a large number of cases to consider. The matrices in~\eqref{formfg} are all diagonal and each contains at least 2 zeros on the diagonal because $\mathcal J$ and $\mathcal H$ must always be included as eigenfunctions. The remaining eigenfunctions correspond to~$3$,~$2$,~$1$ or~0 eigenvalues~$\lambda_j$. All possible choices have to be considered from the eigenfunctions listed above.

\subsubsection{Form (\ref{formfg}a)}
Since the eigenvalues for real Euclidean space must occur in complex-conjugate pairs, a system of this form is only possible for Minkowski space. We examine all such cases and find numerous FLD systems, but none are 2-parameter functionally independent superintegrable.

\subsubsection{Form (\ref{formfg}b)}

We find the following FLD bases and potentials (in each case $F$ is an arbitrary function of its argument and $b$ is an arbitrary parameter)
\begin{gather}\label{pot1} {\mathcal B}=\big\{{\mathcal H}^0,{\mathcal J}^2,L_{+1},L_{-1},L_{0,2}\big\} ,\qquad V(r,z)=F\big(r^2+z^2\big)+\frac{b z}{r\big(r^2+z^2\big)},\\
 \label{pot2} {\mathcal B}=\big\{{\mathcal H}^0,{\mathcal J}^2,L_{+2},L_{-2},L_{0,2}\big\},\qquad V(r,z)=F\big(r^2+z^2\big)+\frac{b}{z^2},\\
 \label{pot3} {\mathcal B}=\big\{{\mathcal H}^0,{\mathcal J}^2,N_{+2},N_{-2},N_{0,1}\big\},\qquad V(r,x)=br^2+F(z),\end{gather}
and
\begin{equation}\label{pot4} {\mathcal B}=\big\{{\mathcal H}^0,{\mathcal J}^2,M_{+1,1},M_{-1,1},N_{0,1}\big\},\qquad V(r,z)=\frac{b}{r}+F(z),\end{equation}

In addition, there is the FLD basis and potential
\begin{equation}\label{pot5}{\mathcal B}=\big\{{\mathcal H}^0,{\mathcal J}^2,N_{+2},N_{-2},N_{0,1}\big\} ,\qquad V(r,z)=b_1\big(4 r^2+z^2+2q z\big)+\frac{b_2}{(z+q)^2},\end{equation}
which is 2-parameter superintegrable.

\subsubsection{Form (\ref{formfg}c)}
Since the eigenvalues for real Euclidean space must occur in complex-conjugate pairs, a system of this form is only possible for Minkowski space. We examine all such systems and find that none are FLD.
\subsubsection{Form (\ref{formfg}d)}
Checking over all possibilities for systems with this eigenvalue form, we find that none are FLD.

\subsubsection{Symmetry algebras}
For the potential \eqref{pot1}, we have the symmetries
\begin{gather*}
 \mathcal{J}=J_{12}, \qquad \mathcal{S}_{1}=\mathcal{H}=N_{0,2}+F\big(r^2+z^2\big)+\frac{b z}{r\big(r^2+z^2\big)},\qquad \mathcal{S}_{2}=\mathcal{J}^2, \\
 \mathcal{S}_{3}=L_{0,2}+\frac{b z}{r},\qquad \mathcal{S}_4=L_{-1}-\frac{{\rm i} b {\rm e}^{-{\rm i}\theta}}{4},\qquad \mathcal{S}_{5}=L_{+1}-\frac{{\rm i}b {\rm e}^{{\rm i}\theta}}{4}.
\end{gather*}
They satisfy ${\rm i} z {\rm e}^{{\rm i}\theta}\mathcal{J}^2+r {\rm e}^{2{\rm i}\theta}\mathcal{S}_{4}^0+r\mathcal{S}_{5}^0=0$
and their nonzero commutators are
\begin{gather*}
 \{\mathcal{J},\mathcal{S}_{4}\}={\rm i}\mathcal{S}_{4},\qquad \{\mathcal{J},\mathcal{S}_{5}\}=-{\rm i}\mathcal{S}_{5},\qquad \{\mathcal{S}_{3},\mathcal{S}_{4}\}=-2{\rm i}\mathcal{J}\mathcal{S}_{4}, \\
 \{\mathcal{S}_{3},\mathcal{S}_{5}\}=2{\rm i}\mathcal{J}\mathcal{S}_{5},\qquad \{\mathcal{S}_{4},\mathcal{S}_{5}\}=\tfrac{{\rm i}}{2}\mathcal{J}\mathcal{S}_{3}-\tfrac{{\rm i}}2\mathcal{J}^3.
\end{gather*}

For the potential \eqref{pot2},
we have the symmetries
\begin{gather*}
 \mathcal{J}=J_{12}, \qquad \mathcal{S}_{1}=\mathcal{H}=N_{0,2}+F\big(r^2+z^2\big)+\frac{b }{z^2},\qquad \mathcal{S}_{2}=\mathcal{J}^2, \\
 \mathcal{S}_{3}=L_{0,2}+\frac{b r^2}{z^2},\qquad \mathcal{S}_{4}=L_{+2}-\frac{{\rm i} b r^2 {\rm e}^{2{\rm i}\theta}}{8z^2},\qquad \mathcal{S}_{5}=L_{-2}-\frac{{\rm i} b r^2 {\rm e}^{-2{\rm i}\theta}}{8z^2}.
\end{gather*}
They satisfy $
 2{\rm i} z^2 {\rm e}^{2{\rm i}\theta}\mathcal{J}^2-{\rm i} r^2 {\rm e}^{2{\rm i}\theta}\mathcal{S}_{3}^0-4r^2\mathcal{S}_{4}^0-4r^2{\rm e}^{4{\rm i}\theta}\mathcal{S}_{5}^0=0$
and their nonzero commutators are
\begin{gather*}
 \{\mathcal{J},\mathcal{S}_{4}\}=-2{\rm i}\mathcal{S}_{4},\qquad \{\mathcal{J},\mathcal{S}_{5}\}=2{\rm i}\mathcal{S}_{5},\qquad \{\mathcal{S}_{3},\mathcal{S}_{4}\}=-4{\rm i}\mathcal{J}\mathcal{S}_{4}, \\
 \{\mathcal{S}_{3},\mathcal{S}_{5}\}=-4{\rm i}\mathcal{J}\mathcal{S}_{5},\qquad \{\mathcal{S}_{4},\mathcal{S}_{5}\}=\tfrac{{\rm i}}{8}\mathcal{J}\mathcal{S}_{3}+\tfrac{b}4\mathcal{J}^3.
\end{gather*}

For the potential \eqref{pot3}, we have the symmetries
\begin{gather*}
 \mathcal{J}=J_{12}, \qquad \mathcal{S}_{1}=\mathcal{H}=N_{0,2}+br^2+G(z), \qquad \mathcal{S}_{2}=\mathcal{J}^2, \\
 \mathcal{S}_{3}=N_{+2}-\frac{br^2{\rm e}^{2{\rm i}\theta}}{4} ,\qquad \mathcal{S}_{4}=N_{-2}+\frac{br^2 {\rm e}^{-2{\rm i}\theta}}{4},\qquad \mathcal{S}_{5}=N_{0,1}+G(z).
\end{gather*}
They satisfy $2{\rm e}^{2{\rm i}\theta}\mathcal{J}^2-r^2{\rm e}^{2{\rm i}\theta}\mathcal{S}_{2}^0-2r^2\mathcal{S}_{3}^2r^2{\rm e}^{4{\rm i}\theta}\mathcal{S}_{4}^0+r^2{\rm e}^{2{\rm i}\theta}\mathcal{S}_{5}^0=0$
and their nonzero commutators are
\begin{gather*}
 \{\mathcal{J},\mathcal{S}_{3}\}=-2{\rm i}\mathcal{S}_{3},\qquad \{\mathcal{J},\mathcal{S}_{4}\}=2{\rm i}\mathcal{S}_{4},\qquad \{\mathcal{S}_{3},\mathcal{S}_{4}\}={\rm i} b\mathcal{J}.
\end{gather*}

For the potential \eqref{pot4}, we have the symmetries
\begin{gather*}
 \mathcal{J}=J_{12}, \qquad \mathcal{S}_{1}=\mathcal{H}=N_{0,2}+F(z)+\frac{b }{r}, \qquad \mathcal{S}_{2}=\mathcal{J}^2\\
 \mathcal{S}_{3}=M_{+1,1}+\frac{{\rm i}b {\rm e}^{{\rm i}\theta}}{4} ,\qquad \mathcal{S}_{4}=M_{-1,1}+\frac{{\rm i} b {\rm e}^{-{\rm i}\theta}}{4},\qquad \mathcal{S}_{5}=N_{0,1}+F(z).
\end{gather*}
They satisfy ${\rm i} {\rm e}^{{\rm i}\theta}\mathcal{J}^2- r\mathcal{S}_{3}^0-r {\rm e}^{2{\rm i}\theta}\mathcal{S}_{4}^0=0$
and their nonzero commutators are
\begin{gather*}
 \{\mathcal{J},\mathcal{S}_{3}\}=-{\rm i}\mathcal{S}_{3},\qquad \{\mathcal{J},\mathcal{S}_{4}\}={\rm i}\mathcal{S}_{4},\qquad \{\mathcal{S}_{3},\mathcal{S}_{4}\}=\tfrac{{\rm i}}2 \mathcal{J}(\mathcal{S}_{5}-\mathcal{S}_{2}).
\end{gather*}

For the potential \eqref{pot5}, we have the symmetries
\begin{gather*}
 \mathcal{J}=J_{12}, \qquad \mathcal{S}_{1}=\mathcal{H}=N_{0,2}+b_1\big(4 r^2+z^2+2q z\big)+\frac{b_2}{(z+q)^2},\qquad \mathcal{S}_{2}=\mathcal{J}^2,\\
 \mathcal{S}_{3}=N_{+2}-b_1 r^2 {\rm e}^{2{\rm i}\theta},\qquad \mathcal{S}_{4}=N_{-2}+b_1 r^2 {\rm e}^{-2{\rm i}\theta},\qquad
 \mathcal{S}_{5}=N_{0,1}+b_1 z(z+2q)+\frac{b_2}{(z+q)^2}, \\
 \mathcal{S}_{6}=M_{+1,2}-qN_{+1}+\frac{r{\rm e}^{{\rm i}\theta}(b_1(z+q)^4-b_2)}{2(z+q)^2} ,\\
 \mathcal{S}_{7}=M_{-1,2}-qN_{-1}-\frac{r{\rm e}^{-{\rm i}\theta}(b_1(z+q)^4-b_2)}{2(z+q)^2}.
\end{gather*}
They satisfy $\mathcal{J}^2-r^2\mathcal{S}_{1}^0-2r^2 {\rm e}^{-2{\rm i}\theta}\mathcal{S}_{5}+2r^2 {\rm e}^{2{\rm i}\theta}\mathcal{S}_{6}+r^2\mathcal{S}_{7}=0$. The subset $\{\mathcal{J},\mathcal{S}_{1},\mathcal{S}_{2},\mathcal{S}_{5},\mathcal{S}_{6},\mathcal{S}_{7}\}$ generates a closed quadratic algebra with nonzero relations
\begin{gather*}
\{\mathcal{J},\mathcal{S}_{3}\}=-2{\rm i} \mathcal{S}_{3},\qquad \{\mathcal{J},\mathcal{S}_{4}\}=2{\rm i} \mathcal{S}_{4},\qquad \{\mathcal{S}_{3},\mathcal{S}_{4}\}=4{\rm i}b_1\mathcal{J}.
\end{gather*}
However, if any linear combination of ${\mathcal S}_{3}$, ${\mathcal S}_{4}$ is added to the generators, a new 3rd order symmetry is produced that is not a polynomial in the generators, so the resulting algebra doesn't close at second order.

\subsection[Fourth case: ${\mathcal J}=J_{12}+{\rm i}J_{23}$]{Fourth case: $\boldsymbol{{\mathcal J}=J_{12}+{\rm i}J_{23}}$}\label{symmetry4}

This case is similar to the second case, treated in Section \ref{symmetry2}. We make the change of variables
\begin{align*}
x= -\rho\big[{\rm e}^{-\theta}+{\rm e}^{\theta}\big(1/4-r^2\big)\big], \qquad
y=-\rho r\exp(\theta), \qquad
z={\rm i}\rho\big[{\rm e}^{-\theta}-{\rm e}^{\theta}\big(1/4+r^2\big)\big],
\end{align*}
so that $p_r=2(J_{12}+ {\rm i}J_{23})$ and
\begin{gather*}
 p_1=\frac{{\rm e}^{2\theta}\big(4r^2-1\big)(p_{\theta}-\rho p_{\rho})-4(p_{\theta}-2r p_r+\rho p_{\rho})}{4\rho {\rm e}^{\theta}}, \qquad p_2=\frac{r{\rm e}^{2\theta}(\rho p_{\rho}-p_{\theta}-p_r)}{\rho {\rm e}^{\theta}}, \\
 p_3={\rm i}\frac{{\rm e}^{2\theta}(4r^2+1)(p_{\theta}-\rho p_{\rho})+4(p_{\theta}-2r p_r+\rho p_{\rho})}{4\rho {\rm e}^{\theta}},\\
 J_{12}=-2r p_{\theta}+\big(\tfrac14+r^2+{\rm e}^{-2\theta}\big)p_r,\qquad J_{13}={\rm i}(r p_r-p_{\theta}), \\
 J_{23}={\rm i}\big(r^2+{\rm e}^{-2\theta}\big) p_r-2{\rm i} rp_{\theta}-\tfrac{{\rm i}}{4}p_r.
\end{gather*}
Similarly to Section \ref{symmetry2}, we prefer to work with $\tilde{J}=p_r$. The action of $\operatorname{Ad}_{\tilde{J}}$ on the elements of~$R$ in~\eqref{Rdefinition} is
\begin{gather*}
 \operatorname{Ad}_{\tilde{J}} p_1=2 p_2,\qquad \operatorname{Ad}_{\tilde{J}} p_2=-2(p_1-{\rm i} p_3),\qquad \operatorname{Ad}_{\tilde{J}} p_3=-2{\rm i} p_2, \\
 \operatorname{Ad}_{\tilde{J}} J_{12}=2{\rm i} J_{13},\qquad \operatorname{Ad}_{\tilde{J}} J_{13}=-2{\rm i}(J_{12}+{\rm i} J_{23}),\qquad \operatorname{Ad}_{\tilde{J}} J_{23}=-2 J_{13}.
\end{gather*}
From here we can construct a convenient generalized eigenbasis of symmetries.

A basis for the six-dimensional space of order-two symmetries is
\begin{gather*}
 L_1=\tfrac{1}{24}J_{12}^2,\qquad L_2=\tfrac{{\rm i}}{6} J_{12}J_{23},\qquad L_3=\tfrac13\big(J_{12}^2-J_{13}^2+{\rm i} J_{12}J_{23}\big), \\
 L_4=2{\rm i}(J_{12}+{\rm i} J_{23})J_{13},\qquad L_5=4(J_{12}+{\rm i}J_{23})^2, \qquad L_6=J_{12}^2+J_{13}^2+J_{23}^2.
\end{gather*}
Here, $\{L_1,L_2,L_3,L_4,L_5\}$ form a chain and $\{L_5,L_6\}\subset\ker\operatorname{Ad}_{p_r}$.

A basis for the eight-dimensional space of order-one symmetries is
\begin{gather*}
 M_1=\tfrac1{24}p_1 J_{12},\qquad M_2= \tfrac1{12}(p_2 J_{12}+{\rm i} p_1 J_{13}),\qquad
M_3=\tfrac{{\rm i}}{6}(p_1 J_{23}+2p_2 J_{13}+p_3 J_{12}), \\
 M_4=-{\rm i} p_1 J_{13}+p_2(J_{12}+{\rm i}J_{23})-p_3 J_{13},\qquad M_5=-4(p_1-{\rm i} p_3)(J_{12}+{\rm i} J_{23}), \\
 M_6=\tfrac12(p_2 J_{12}-{\rm i} p_1 J_{13}), \qquad M_7=-2 p_1 J_{12}-{\rm i} p_1 J_{23}+{\rm i} p_3 J_{12}, \\
 M_8=-2({\rm i} p_1+p_3)J_{13}-2 p_2(J_{12}+{\rm i}J_{23}).
\end{gather*}
Here, we have two separate chains: $\{M_1,M_2,M_3,M_4,M_5\}$ and $\{M_6,M_7,M_8\}$: $\{M_5,M_8\}\subset\ker\operatorname{Ad}_{p_r}$.

A basis for the six-dimensional space of order-zero symmetries is
\begin{gather*}
 N_1=\tfrac{1}{24}p_1^2,\qquad N_2=\tfrac16 p_1 p_2,\qquad N_3=-\tfrac13\big(p_1^2-p_2^2+{\rm i} p_1 p_3\big),\\
 N_4=-2(p_1-{\rm i} p_3)p_2,\qquad N_5=4(p_1-{\rm i}p_3)^2,\qquad N_6=\mathcal{H}^0=p_1^2+p_2^2+p_3^2.
\end{gather*}
Here, $\{N_1,N_2,N_3,N_4,N_5\}$ form a chain and $\{N_5,N_6\}\subset\ker\operatorname{Ad}_{\tilde{J}}$.

The possible canonical forms are
\begin{gather}\label{case4_formab}\allowdisplaybreaks
\left(\begin{matrix}
0 & 1 & 0 & 0 & 0 \\
0 & 0 & 1 & 0 & 0 \\
0 & 0 & 0 & 1 & 0 \\
0 & 0 & 0 & 0 & 1 \\
0 & 0 & 0 & 0 & 0
\end{matrix}\right),\qquad
\left(\begin{matrix}
0 & 1 & 0 & 0 & 0 \\
0 & 0 & 1 & 0 & 0 \\
0 & 0 & 0 & 1 & 0 \\
0 & 0 & 0 & 0 & 0 \\
0 & 0 & 0 & 0 & 0
\end{matrix}\right), \\
 \left(\begin{matrix}
0 & 1 & 0 & 0 & 0 \\
0 & 0 & 1 & 0 & 0 \\
0 & 0 & 0 & 0 & 0 \\
0 & 0 & 0 & 0 & 1 \\
0 & 0 & 0 & 0 & 0
\end{matrix}\right),\qquad
\left(\begin{matrix}
0 & 1 & 0 & 0 & 0 \\
0 & 0 & 1 & 0 & 0 \\
0 & 0 & 0 & 0 & 0 \\
0 & 0 & 0 & 0 & 0 \\
0 & 0 & 0 & 0 & 0
\end{matrix}\right),\label{case4_formcd} \\
\left(\begin{matrix}
0 & 1 & 0 & 0 & 0 \\
0 & 0 & 0 & 0 & 0 \\
0 & 0 & 0 & 1 & 0 \\
0 & 0 & 0 & 0 & 0 \\
0 & 0 & 0 & 0 & 0
\end{matrix}\right), \qquad
\left(\begin{matrix}
0 & 1 & 0 & 0 & 0 \\
0 & 0 & 0 & 0 & 0 \\
0 & 0 & 0 & 0 & 0 \\
0 & 0 & 0 & 0 & 0 \\
0 & 0 & 0 & 0 & 0
\end{matrix}\right), \qquad
\left(\begin{matrix}
0 & 0 & 0 & 0 & 0 \\
0 & 0 & 0 & 0 & 0 \\
0 & 0 & 0 & 0 & 0 \\
0 & 0 & 0 & 0 & 0 \\
0 & 0 & 0 & 0 & 0
\end{matrix}\right).\label{case4_formefg}
\end{gather}

\subsubsection{Form (\ref{case4_formab}a)}
Here we have a 5-chain. This form does not occur because it cannot not contain $\mathcal{H}$ (moreover, $\mathcal{H}$ and $p_r^2$ cannot be in the same chain).
\subsubsection{Form (\ref{case4_formab}b)}
This form can (and must) contain both $\tilde{\mathcal{J}}^2$ and $\mathcal{H}^0$. Because $\mathcal{H}^0$ is not in a nontrivial chain, the basis must be
\[
\big\{\mathcal{H}^0,L_2+\beta L_3+\gamma L_4+\delta_1 L_5+\delta_2 L_6,L_3+\beta L_4+\gamma L_5,L_4+\beta L_5,L_5=p_r^2\big\},
\] but we can take $\beta=\gamma=\delta_1=0$ by a canonical form-preserving change of basis.
The chain $\{L_2+\delta_2 L_6,L_3,L_4,L_5\}$ is FLD but does not correspond to an admissible potential.

\subsubsection{Form (\ref{case4_formcd}c)}
Here we have a 3-chain and a 2-chain. This form does not occur because it cannot not contain~$\mathcal{H}$.

\subsubsection{Form (\ref{case4_formcd}d)}
Here we have a 3-chain and two 1-chains. One of the 1-chains is~$\mathcal{H}^0$. First suppose the second one-chain is $p_r^2$.
Using canonical form-preserving changes of basis when necessary, the possible 3-chains are equivalent to one of $\{N_3,N_4,N_5\}$,
\begin{gather*}
\{\alpha_1 M_3+M_6+\beta_1M_4+\gamma_1 M_5, \alpha_1 M_4+M_7+\beta_1 M_5, \alpha_1 M_5+ M_8\},\\
 \{M_3+\alpha_2 M_6+\beta_2 M_7+\gamma_2 M_8, M_4+\alpha_2 M_7+\beta_2 M_8, M_5+\alpha_2 M_8\}.
\end{gather*}
The first case is FLD and provides the admissible potential
\begin{align}\label{case4_potential_1}
V(\rho,\theta)=b\rho^2+F\big(\rho {\rm e}^{\theta}\big).
\end{align}
The second case is not FLD. The third case is FLD when $\alpha_2=0$ and $\beta_2=\pm 1/2$ but these cases do not provide 2-parameter potentials.

If $\tilde{\mathcal{J}}^2$ is not one of the 1-chains, our basis must contain (after a canonical form-preserving change of basis) $\{L_3+\gamma_2 L_6,L_4,L_5\}$. It is left to chose a second 1-chain, for which there are three possibilities: $L_6$ (in which case we can take $\gamma=0$ by a canonical form-preserving change of basis), $\mu M_5+\nu M_8$, and $N_5$. The first possibility gives an FLD basis and has the admissible potential
\begin{equation}\label{case4_potential_2}
V(\rho,\theta)=\frac{b {\rm e}^{-2\theta}}{\rho^2}+F(\rho).
\end{equation}
The second and third possibilities are FLD when $\gamma=1/3$, but neither leads to a 2-parameter potential.

\subsubsection{Form (\ref{case4_formefg}e)}
Here we have two 2-chains and a 1-chain, which must be $\mathcal{H}^0$. One of the 2-chains must be $\{L_4+\mu L_6,L_5\}$. The possibilities for the other 2-chain are (after canonical form-preserving changes of basis) $\{\alpha M_4+M_7+\gamma M_5,\alpha M_5+M_8\}$ or $\{M_4+\beta M_7+\delta M_8,M_5+\beta M_8\}$. Only the latter (together with $L_5$) is FLD when $\alpha=1$, $\beta=\gamma=0$, and $\delta=-1/2$ but does not yield an admissible potential.

\subsubsection{Form (\ref{case4_formefg}f)}
Here we have a 2-chain and three 1-chains, one of which must be $\mathcal{H}^0$. We first assume that the 2-chain is $\{L_4+\mu L_6,L_5\}$. There are then four ways to choose the remaining two one chains: $\{N_5,\alpha M_5+\beta M_8\}$, $\{\alpha M_5+\beta M_8,L_6\}$ (in which case we take $\mu=0$), $\{N_5,L_6\}$ (again, $\mu=0$), or $\{M_5,M_8\}$. The first case is FLD when $\alpha=0$ but does yield an admissible potential. The second, third, and fourth cases are not FLD.

If the 2-chain is not $\{L_4+\mu L_6,L_5\}$, one of the 1-chains must be $L_5=\tilde{\mathcal{J}}^2$. Then we have one 1-chain ($N_5$, $L_6$, or $\mu M_5+\nu M_8$) and one 2-chain ($\{N_4,N_5\}$, $\{\alpha M_4+M_7+\gamma M_5,\alpha M_5+M_8\}$, or $\{M_4+\beta M_7+\delta M_8,M_5+\beta M_8\}$) to choose. There are several FLD bases but only one leads to an admissible potential
\begin{gather}\label{case4_potential_3}
V(\rho,\theta)=\frac{b {\rm e}^{-3\theta}}{\rho}+F\big(\rho {\rm e}^{\theta}\big).
\end{gather}

\subsubsection{Form (\ref{case4_formefg}g)}
This case consists of five 1-chains, two of which must be $\mathcal{H}^0$ and $\tilde{\mathcal{J}}^2$. There are therefore three subcases to consider: the remaining symmetries are either $\{L_6,\alpha M_5+\beta M_8,N_5\}$, $\{L_6,M_5,M_8\}$, or $\{M_5,M_8,N_5\}$. The first and third cases are FLD in certain cases but the corresponding potentials do not have 2 independent parameters.

\subsubsection{Structure algebras}
For the potential \eqref{case4_potential_1}, we have the symmetries
\begin{gather*}
 \tilde{\mathcal{J}}=2(J_{12}+{\rm i} J_{23}), \qquad \mathcal{S}_{1}=\mathcal{H}=N_6+b\rho^2+F\big(\rho {\rm e}^{\theta}\big),\qquad \mathcal{S}_{2}=\tilde{\mathcal{J}}^2, \\
 \mathcal{S}_{3}=L_5+\frac{-\big[4+{\rm e}^{2\theta}\big(1-12r^2\big)\big]b\rho^2-4F\big(\rho {\rm e}^{\theta}\big)}{24}, \qquad \mathcal{S}_{4}=N_4-b\rho^2r{\rm e}^{2\theta},\\ \mathcal{S}_{5}=N_5+b\rho^2{\rm e}^{2\theta}.
\end{gather*}
They satisfy $\rho^2 {\rm e}^{2\theta}\big(4{\mathcal{S}}_{1}^0+24 {\mathcal{S}}_{3}^0+24 {\mathcal{S}}_{4}^0+\big(1+12r^2\big){\mathcal{S}}_{5}^0\big)-12\tilde{\mathcal{J}}^2=0$
and their nonzero commutators are
\begin{gather*}
\big\{\tilde{\mathcal{J}},\mathcal{S}_{3}\big\}=\mathcal{S}_{4},\qquad \big\{\tilde{\mathcal{J}},\mathcal{S}_{4}\big\}=\mathcal{S}_{5},\qquad
\{\mathcal{S}_{3},\mathcal{S}_{4}\}=b\tilde{\mathcal{J}}.
\end{gather*}

For the potential \eqref{case4_potential_2}, we have the symmetries
\begin{gather*}
 \tilde{\mathcal{J}}=2(J_{12}+{\rm i} J_{23}),\qquad \mathcal{S}_{1}=\mathcal{H}=N_6+\frac{b{\rm e}^{-2\theta}}{\rho^2}+F(\rho),\qquad \mathcal{S}_{2}=\tilde{\mathcal{J}}^2+v_0, \\
 \mathcal{S}_{3}=L_3+\frac{b r^2}{2}-\frac{b{\rm e}^{-2\theta}}{3},\qquad \mathcal{S}_{4}= L_4-b r, \qquad \mathcal{S}_{5}=L_6+b {\rm e}^{-2\theta}.
\end{gather*}
They satisfy $\big(1+12{\rm e}^{-2\theta}-12r^2\big) \tilde{\mathcal{J}}^2-12{\mathcal{S}}_{3}^0-12r{\mathcal{S}}_{4}^0-4{\mathcal{S}}_{5}^0=0$
and their nonzero commutators are
\begin{gather*}
 \big\{\tilde{\mathcal{J}},\mathcal{S}_{3} \big\}=\mathcal{S}_{4}, \qquad \big\{\tilde{\mathcal{J}},\mathcal{S}_{4} \big\}=\mathcal{S}_{2}, \qquad
 \{\mathcal{S}_{3},\mathcal{S}_{4} \}=-2\tilde{\mathcal{J}}\mathcal{S}_{3}+\tfrac13\tilde{\mathcal{J}}\mathcal{S}_{5} +\tfrac16\tilde{\mathcal{J}}^3+\tfrac{b}{12}\tilde{\mathcal{J}}.
\end{gather*}

For the potential \eqref{case4_potential_3}, we have the symmetries
\begin{gather*}
 \tilde{\mathcal{J}}=2(J_{12}+{\rm i} J_{23}),\qquad \mathcal{S}_{1}=\mathcal{H}=N_6+\frac{b{\rm e}^{-3\theta}}{\rho}+F\big({\rm e}^{\theta}\rho\big) ,\qquad \mathcal{S}_{2}=\tilde{\mathcal{J}}^2, \\
 \mathcal{S}_{3}=M_4-\frac12M_8-\frac{b\rho r^2}{2\rho}, \qquad \mathcal{S}_{4}=M_5+b r,\qquad \mathcal{S}_{5}=N_5-\frac{2b {\rm e}^{-\theta}}{\rho}.
\end{gather*}
They satisfy $\tilde{\mathcal{J}}^2+\rho {\rm e}^{\theta} \hat{\mathcal{S}}_{3}^0+\rho r {\rm e}^{\theta}\hat{\mathcal{S}}_{4}^0=0$,
and their nonzero commutators are
\begin{gather*}
 \big\{\tilde{\mathcal{J}},\mathcal{S}_{3}\big\}=\mathcal{S}_{4},\qquad \big\{\tilde{\mathcal{J}},\mathcal{S}_{4}\big\}=-b, \qquad
\{\mathcal{S}_{3}, \mathcal{S}_{4}\}=-\tfrac12\tilde{\mathcal{J}}\mathcal{S}_{4}.
\end{gather*}

\subsection[Fifth case: ${\mathcal J}=-{\rm i}J_{12}+J_{23}-{\rm i}p_1+p_3$]{Fifth case: $\boldsymbol{{\mathcal J}=-{\rm i}J_{12}+J_{23}-{\rm i}p_1+p_3}$}\label{symmetry5}

This case does not occur for complex Euclidean systems since the symmetry $\mathcal J$ is not homogeneous.

\subsection{Additional comments} \label{5.5}
We note that for all of the systems classified we can find a complete integral for the Hamilton--Jacobi equation.
For example, the system~(\ref{case2_potential10}) with $a=-3$, has the Hamilton--Jacobi equation
\begin{gather*}
4\frac{\partial S}{\partial\xi}\frac{\partial S}{\partial\eta}+\left(\frac{\partial S}{\partial z}\right)^2+\frac{b z}{\xi^3}+F(\xi)=E.
\end{gather*}
For this equation, we find the complete integral
\begin{gather*}
S(\xi,\eta,z)= \frac{b^2}{768 c_1^3 \xi^3}+\frac{b\big(2c_1^2z+c_2\xi\big)}{16 c_1^3 \xi^2}+c_1 \eta+\frac{c_1^2(4c_2 z+E\xi)-c_2^2\xi}{4c_1^3}
 -\frac1{4c_1}\int F(\xi)\,\mathrm{d}\xi, 
\end{gather*}
where $c_1$, $c_2$ are arbitrary constants and another constant arises from the indefinite integral of~$F$. The corresponding Schr\"{o}dinger equation
\begin{gather*}
\left(4\partial_{\xi}\partial_{\eta}+\partial_z^2+\frac{bz}{\xi^3}+F(\xi)\right)\psi=E\psi
\end{gather*}
has the solution
\begin{equation*}
\psi(\xi,\eta,z)=\exp S(\xi,\eta,z).
\end{equation*}

The symmetry algebras for these FLD superintegrable systems don't always close. However the symmetries always provide some information about the classical trajectories of solutions of the Hamilton--Jacobi equation. If a superintegrable system is functionally independent the trajectories are uniquely determined, However, if the system is FLD then we can solve for one of the constants of the motion in terms of the others. Thus a 2-parameter manifold can be computed from the symmetries such that the trajectories of solutions of the Hamilton--Jacobi equation must lie on this manifold.

\begin{table}[th!]\centering\small \caption{Summary of the FLD systems.}\label{FLDlist}

\vspace{1mm} \begin{tabular}{ cl }
\hline
\\
(\ref{soln1})& $V(y,z)=\frac{F(z/y)}{y^2} $\\
(\ref{soln2})&$ V(y,z)= \frac{\beta_1}{y^2}+\frac{1}{(b_{11}y+b_{15}z)^2}\left(\beta_3+\frac{(-b_{11}z+b_{15}y)\beta_2}{\sqrt{y^2+z^2}}+\frac{(2y^2-z^2)\beta_1b_{15}^2-2 \beta_1b_{11}b_{15}yz}{y^2}\right)$\\
(\ref{Minksoln})&$V(y,z)=b(z-{\rm i}y)+F(z+{\rm i}y)$\\
(\ref{Minksolna})&$V(y,z)=b_1(z-{\rm i}y)+b_2(z+{\rm i}y)^2$\\
(\ref{soln4a})& $V(y,z)= b_1z+\frac{b_2}{y^2}$\\
(\ref{soln4b})& $V(y,z)= cy+F(z)$\\
(\ref{soln4c})& $V(y,z)= c_1y+c_2z$\\
(\ref{case2_potential1}) & $V(\xi,z)=\frac{b}{\xi^2}+F(q \xi+z)$\\
 (\ref{case2_potential2}) & $V(\xi,z)=\frac{b}{\xi^2}+F(z)$ \\
 (\ref{case2_potential3})& $V(\xi,z)=\frac{F(z/\xi)}{\xi^2} $\\
(\ref{case2_potential4}) &$V(\xi,z)=\frac{b}{\xi^2}+b_2(q\xi+z)$ \\
 (\ref{case2_potential5}) &$V(\xi,z)=\frac{b_1}{\xi^{2/3}}+\frac{b_2(q\xi+z)}{\xi^{4/3}}$\\
(\ref{case2_potential6}) & $V(\xi,z)=b_1\xi+\frac{b_2}{(q\xi+z)^2}$\\
(\ref{case2_potential7})& $V(\xi,z)=\frac{b}{(q\xi+z)^2}+F(\xi)$ \\
 (\ref{case2_potential8})&$V(\xi,z)=\frac{b z}{\xi^3}+F(\xi)$ \\
 (\ref{case2_potential9}) & $V(\xi,z)=\frac{b z}{\xi^{3/2}}+F(\xi)$ \\
 (\ref{case2_potential10})& $V(\xi,z)=b z\xi^{a}+F(\xi)$ \\
 (\ref{case2_potential11}) & $V(\xi,z)=b\xi+F(q\xi+z)$ \\
(\ref{case2_potential12})&$V(\xi,z)=b \xi z+F(\xi)$ \\
 (\ref{case2_potential13}) & $V(\xi,z)=b z+F(\xi)$ \\
 (\ref{case2_potential14}) & $V(\xi,z)=\frac{b}{z^2}+F(\xi) $ \\
 (\ref{case2_potential15}) & $V(\xi,z)=\frac{b_1\xi^2+b_2z(\mu_1 z+\mu_2\xi)}{\xi^2(2\mu_1 z+\mu_2\xi)^2}+F(\xi)$ \\
(\ref{pot1})& $ V(r,z)=F\big(r^2+z^2\big)+\frac{b z}{r (r^2+z^2 )} $ \\
(\ref{pot2})&$ V(r,z)=F\big(r^2+z^2\big)+\frac{b}{z^2} $ \\
 (\ref{pot3})& $ V(r,x)=br^2+F(z)$ \\
 (\ref{pot4})& $ V(r,z)=\frac{b}{r}+F(z)$ \\
 (\ref{pot5})& $ V(r,z)=\frac{c_1(4b_5r^2+b_5z^2+2b_3z)}{4a_1b_5}-\frac{c_2}{(2b_5(a_1+2k_1)(b_5z+b_3)^2} $ \\
 (\ref{case4_potential_1})& $V(\rho,\theta)=b\rho^2+F\big(\rho {\rm e}^{\theta}\big)$\\
 (\ref{case4_potential_2})& $V(\rho,\theta)=\frac{b {\rm e}^{-2\theta}}{\rho^2}+F(\rho)$ \\
(\ref{case4_potential_3}) & $V(\rho,\theta)=\frac{b {\rm e}^{-3\theta}}{\rho}+F\big(\rho {\rm e}^{\theta}\big)$\\
\hline
\end{tabular}\end{table}

\section{The complex 3-sphere}\label{6}

 We choose a standardized
Cartesian-like coordinate system $\{x,y,z\}$ on the 3-sphere such that
the Hamiltonian is
\begin{equation*}
{\mathcal H}=\left(1+\frac{r^2}{4}\right)^2\big(p_x^2+p_y^2+p_z^2\big)+V,
\end{equation*}
where $r^2=x^2+y^2+z^2$.
These coordinates can be related to the standard realization of the
sphere via complex coordinates ${\bf s}=(s_1,s_2,s_3,s_4)$ such that
$\sum_{j=1}^4 s_j^2=1$ and ${\rm d}s^2=\sum_j {\rm d}s_j^2$ via
\begin{equation*}
s_1=\frac{4x}{4+r^2},\qquad s_2=\frac{4y}{4+r^2},\qquad
s_3=\frac{4z}{4+r^2},\qquad s_4=\frac{4-r^2}{4+r^2}
\end{equation*}
with inverse
$ x=2s_1/(1+s_4)$, $ y=2s_2/(1+s_4)$, $z=2s_3/(1+s_4)$.
 A basis of Killing vectors for the zero
potential system is $J_h$, $K_h$, $h=1,2,3$ where
\begin{gather*}
J_{23}=yp_z-zp_y,\qquad J_{31}=zp_x-xp_z,\qquad J_{12}=xp_y-yp_x,\nonumber\\
K_1=\left(1+\frac{x^2-y^2-z^2}{4}\right)p_x+\frac{xy}{2}p_y+\frac{xz}{2}p_z,\nonumber\\
K_2=\left(1+\frac{y^2-x^2-z^2}{4}\right)p_y+\frac{xy}{2}p_x+\frac{yz}{2}p_z,\nonumber\\
K_3=\left(1+\frac{z^2-x^2-y^2}{4}\right)p_z+\frac{xz}{2}p_x+\frac{yz}{2}p_y.
\end{gather*}
 The relation between this basis and the
standard basis of rotation generators on the sphere
$ I_{\ell m}=s_\ell p_m-s_m p_\ell=-I_{m\ell}$
is
\begin{equation*}
J_{23}=I_{23},\qquad J_{13}=I_{13},\qquad J_{12}=I_{12},\qquad K_1=I_{41},\qquad K_2=I_{42},\qquad
K_3=I_{43}.
\end{equation*}

To solve the classification problem for Hamiltonians in the class ${\mathcal O}_{{\mathcal H}^0}(4)$ on the complex 3-sphere we can use methods analogous to those for Euclidean space. From result (\ref{Jcan}) applied to the 3-sphere we see that, up to conjugacy, there are just 2 cases to consider: ${\mathcal J}=J_{12}$ and ${\mathcal J}=J_{12}+{\rm i}J_{23}$. The details are complicated but we find that there are no class ${\mathcal O}_{{\mathcal H}^0}(4)$ FLD superintegrable systems on the complex 3-sphere in this class. To save space we do not provide the details here. They can be found in the online paper \cite{ArXiv2}.

\section{Conclusions}\label{7} This paper is part of a program to classify all 2nd order superintegrable classical and quantum systems on 3-dimensional conformally flat complex manifolds. We have worked out the basic structure theory for certain FLD-superintegrable systems on these manifolds and classified all such systems on constant curvature spaces that are in the class ${\mathcal O}_{{\mathcal H}^0}(4)$. There turn out to be no such systems on the complex 3-sphere~\cite{ArXiv2}. The remaining systems to classify are highly degenerate, admitting at least 6 linearly independent symmetries and as yet we have found no verifiable examples. For complex Euclidean space we list the 2-parameter potentials in Table~\ref{FLDlist}. In most of the cases the potential depends on at least one arbitrary function. The key to the classification is a proof that all such systems admit a 1st order symmetry.

\subsection*{Acknowledgements}
We thank the referees for suggestions that improved our presentation, in particular one referee for pointing out a gap in our original classification. B.K.B.\ acknowledges support from the G\"oran Gustafsson Foundation. W.M.\ was partially supported by a grant from the Simons Foundation (\#~412351 to Willard Miller, Jr.).

\pdfbookmark[1]{References}{ref}
\LastPageEnding

\end{document}